\documentclass[10pt,journal,compsoc]{IEEEtran}
\IEEEoverridecommandlockouts
\usepackage{textcomp}
\usepackage{xcolor}
\usepackage{tikz}
\usepackage{amsmath}
\usepackage{cite}
\usepackage{amsthm}
\usepackage{amssymb}
\usepackage{graphicx}
\usepackage{extarrows}
\usepackage[noend]{algpseudocode}  
\usepackage{booktabs}
\usepackage{bookmark}
\usepackage{multirow}
\usepackage{comment}
\newtheorem{theorem}{Theorem}

\newtheorem{lemma}{Lemma}
\newtheorem{remark}{Remark}
\newtheorem{definition}{Definition}
\newtheorem{corollary}{Corollary}

\usepackage{amsfonts}
\newcommand{\e}[1]{{\mathbb E}\left[ #1 \right]}
\usepackage{color}
\usepackage{array}
\usepackage{subfigure}
\DeclareMathAlphabet{\mathcal}{OMS}{cmsy}{m}{n}
\SetMathAlphabet{\mathcal}{bold}{OMS}{cmsy}{b}{n}

\usepackage[linesnumbered,ruled,vlined]{algorithm2e}
\usepackage{enumitem}
\setenumerate[1]{itemsep=0pt,partopsep=0pt,parsep=\parskip,topsep=5pt}
\setitemize[1]{itemsep=0pt,partopsep=0pt,parsep=\parskip,topsep=5pt}
\setdescription{itemsep=0pt,partopsep=0pt,parsep=\parskip,topsep=5pt}
\usepackage{pifont}
\newcommand{\cmark}{\ding{51}}%
\newcommand{\xmark}{\ding{55}}%

\newcommand{\RH}[1]{{\color{orange}$\langle$\textsc{Runchao:} #1$\rangle$}}
\newcommand{\jn}[1]{\textcolor{black}{#1}}
\newcommand{\niu}[1]{\textcolor{black}{#1}}

\newcommand{\tabincell}[2]{\begin{tabular}{@{}#1@{}}#2\end{tabular}}  
\usepackage{multirow}
\usepackage[flushleft]{threeparttable}

\usepackage[noend]{algpseudocode}

\def\BibTeX{{\rm B\kern-.05em{\sc i\kern-.025em b}\kern-.08em
    T\kern-.1667em\lower.7ex\hbox{E}\kern-.125emX}}

\begin{document}
\interfootnotelinepenalty=10000

\author{
    Jianyu Niu$^\dagger$,
    Fangyu Gai,
    Runchao Han,
    Ren Zhang, 
    Yinqian Zhang,
    Chen Feng,
    \IEEEcompsocitemizethanks{
        \IEEEcompsocthanksitem Jianyu Niu and Yinqian Zhang are with the Engineering and Research Institute of Trustworthy Autonomous Systems and the School and Computer Science and Engineering Department, Southern University of Science and Technology, Shenzhen, China.
        Email: niujy@sustech.edu.cn and yinqianz@acm.org.

        \IEEEcompsocthanksitem Fangyu Gai and Chen Feng are with the Blockchain@UBC and the School of Engineering, The University of British Columbia (Okanagan Campus), Kelowna, BC, Canada. Email: greferry@gmail.com and chen.feng@ubc.ca. Contact author: Chen Feng.

        \IEEEcompsocthanksitem Runchao Han is with Monash University and CSIRO-Data61. E-mail: runchao.han@monash.edu.

        \IEEEcompsocthanksitem Ren Zhang is with Cryptape Co. Ltd. and Nervos. E-mail: ren@nervos.org.
    }
    \thanks{$^\dagger$ Part of the work is done at the University of British Columbia.}
}
\sloppy

\title{Crystal: Enhancing Blockchain Mining Transparency with Quorum Certificate}

\IEEEtitleabstractindextext{
    \begin{abstract}
        \jn{
            Researchers have discovered a series of theoretical attacks against Bitcoin's Nakamoto consensus; the most damaging ones are selfish mining, double-spending, and consistency delay attacks.
            These attacks have one common cause: block withholding.
            This paper proposes Crystal, which leverages quorum certificates to resist block withholding misbehavior.
            Crystal continuously elects committees from miners and requires each block to have a quorum certificate, i.e., a set of signatures issued by members of its committee.
            Consequently, an attacker has to publish its blocks to obtain quorum certificates, rendering block withholding impossible.
            To build Crystal, we design a novel two-round committee election in a Sybil-resistant, unpredictable and non-interactive way, and a reward mechanism to incentivize miners to follow the protocol.
            Our analysis and evaluations show that Crystal can significantly mitigate selfish mining and double-spending attacks.
            For example, in Bitcoin, an attacker with 30\% of the total computation power will succeed in double-spending attacks with a probability of 15.6\% to break the 6-confirmation rule; however, in Crystal, the success probability for the same attacker falls to 0.62\%.
            We provide formal end-to-end safety proofs for Crystal, ensuring no unknown attacks will be introduced.
            To the best of our knowledge, Crystal is the first protocol that prevents selfish mining and double-spending attacks while providing safety proof.
        }
    \end{abstract}
    \begin{IEEEkeywords}
        Blockchains, Nakamoto Consensus, Withholding attack, Selfish mining, double-spending, Quorum certificate.
    \end{IEEEkeywords}}

\maketitle
\IEEEdisplaynontitleabstractindextext
\IEEEpeerreviewmaketitle

\IEEEraisesectionheading{\section{Introduction}\label{sec: intro}}
\IEEEPARstart{B}{itcoin}, the largest and most influential cryptocurrency as of 2022, was launched in 2009 by Satoshi Nakamoto.
The success of Bitcoin is mainly due to its innovative Nakamoto Consensus (NC) algorithm for maintaining a distributed ledger (commonly known as blockchain).
NC relies on two innovations 
: $1)$ \textit{Proof-of-Work} (PoW), in which participants, also called miners, have to solve computational puzzles to mine a block (which contains a set of transactions) without prior knowledge of identities (i.e., in a permissionless setting), and $2)$ the \textit{longest chain rule}, by which participants always mine the latest block on the longest chain and can eventually reach consensus over the same chain of blocks.
\jn{As a reward, participants can get a block reward and transaction fees for each block in the longest chain.}
NC is the first consensus algorithm that enables thousands of participants to maintain a public ledger without relying on a trusted third party.
By now, NC and its variants have been adopted in more than four hundred cryptocurrencies~\cite{mapofcoins}.

Unfortunately, the security of NC has been challenged by several attacks such as selfish mining~\cite{eyal2014majority,sapirshtein2016optimal,niu2019selfish}, double-spending~\cite{nakamoto2012bitcoin}, as well as consistency delay attack~\cite{Pass2017, Kiffer2018}.
These attacks can either undermine the incentive compatibility~\cite{eyal2014majority, sapirshtein2016optimal} or threaten the safety~\cite{nakamoto2012bitcoin, Pass2017}.
For example, in selfish mining attacks, after an attacker successfully mines two consecutive blocks on the longest chain, it can \textbf{withhold} these two blocks until it receives a block from an honest participant. In this way,
the attacker will not only get two block rewards but also make the honest block abandoned by all the participants, thereby obtaining higher revenue than its fair share.
As we can see, this attack is rooted in block withholding. More worryingly, block withholding enables a variety of other attacks~\cite{eyal2014majority,sapirshtein2016optimal,niu2019selfish, Kiffer2018}.
\jn{However, it is challenging to resist block withholding for two reasons.
First, the existing mining process is non-transparent, which allows an attacker to withhold blocks and build a private chain.
Second, opaque block propagation makes it hard to distinguish between an honest but delayed block and a maliciously withheld block.\footnote{Zhang and Preneel~\cite{Zhang2019CommonMetrics} referred to this reason as information asymmetry between the attacker and compliant participants, which is caused by unawareness of network connectivity and the lack of a globally synchronous clock.}
}

\jn{
In this paper, we propose Crystal, an NC-style protocol that \emph{resists} block withholding attacks. This distinguishes Crystal from most existing protocols that are tailored to defend only one or two specific attacks, and cannot prevent block withholding attacks completely~\cite{Pass2017fruit, perishzhang, strongchain, Bissias2017BobtailAP}. Crystal employs a novel two-round committee election in a \emph{Sybil-resistant}, \emph{unpredictable} and \emph{non-interactive} way, and enforces each block to obtain a quorum certificate (QC), i.e., a collection of the signatures from a quorum of members in the committee~\cite{castro1999practical,malkhi1997byzantine}.
This prevents the attacker from mining a private chain, as the required QC has to contain votes from honest participants who have seen the block.}
Figure~\ref{fig: architecture} presents a concrete example of the usage of QCs.
Assuming Alice, Bob, and Carol constitute the committee.
An attacker has mined a block $B$ and has obtained a signature from Bob.
To mine a block after $B$, the attacker needs a QC on $B$, which needs an extra signature from Alice or Carol.
Without the full knowledge of the committee members, the attacker has to publish $B$ to some honest participants (at least one from Alice and Carol), hoping to obtain more signatures.
However, once receiving $B$, honest participants will propagate $B$ to the network so that all participants will receive $B$.
If the attacker withholds $B$, then it will stall mining as $B$ is considered invalid without a QC.

\jn{ 
The main challenge of realizing Crystal is the committee formation in the permissionless setting.
First, the committee formation needs to resist Sybil attacks~\cite{sybil2002} such that an attacker cannot create many identities to increase its voting power in the committee.
Second, the committee formation needs to be unpredictable and non-interactive in order to prevent the attacker from predicting the committee members. 
Otherwise, the attacker can impose certain attacks, e.g., bribery, Eclipse attacks~\cite{eclipse2015}, and DoS attack~\cite{chen2021efficient}, to break the system's safety and/or liveness.
To address these problems, we design a two-round committee formation mechanism.
In the first round, participants who have mined the latest blocks are automatically elected as a committee.
This rules out participants without contributions, and thus Sybil attacks.
In the second round, each participant in the committee computes a verifiable random function (VRF)~\cite{micali1999verifiable, vrf} and is elected into a sub-committee if its random output meets an election difficulty parameter.
As computing VRF requires the knowledge of the secret key, a participant only learns the election outcome of itself, but not others, in the second round.
This requires no interaction between participants and makes the probability of predicting committee members negligible.
}

\begin{figure}[t]
    \centering
    \includegraphics[width=2.2in]{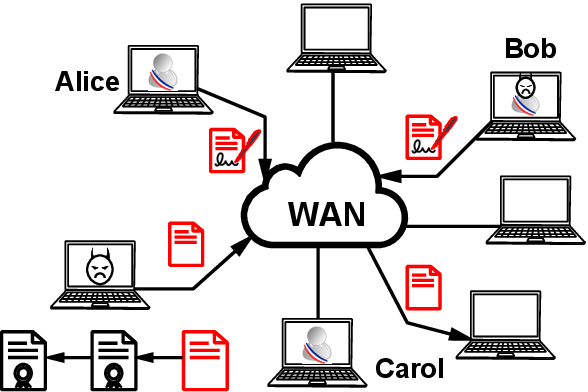}
    \caption{A showcase of QC in Crystal. The attacker corrupts Bob and has to obtain a signature from either Alice or Carol to form a QC of its block.}
    \vspace{-2mm}
    \label{fig: architecture}
\end{figure}


Crystal exhibits significant improvement in resisting withholding-related attacks.
For selfish mining, compared to Bitcoin where a miner with $40\%$ of the computation power can obtain about $52.6\%$ of block rewards~\cite{eyal2014majority}, this miner cannot gain more profits than honest mining in Crystal.
For double-spending attacks, compared to Bitcoin where a miner with $30\%$ of the computation power reverts a transaction with $6$ confirmations with a probability of $15.6\%$~\cite{nakamoto2012bitcoin,rosenfeld2014analysis}, the success probability for this attacker falls below $0.62\%$ in Crystal.
In addition, as Crystal makes simple and minimal changes over NC, Crystal can provide formal end-to-end safety proof ensuring that no unknown attacks are introduced.
The simplicity also allows Crystal to be a drop-in replacement for NC, making NC-style protocols resist withholding-based attacks.


\vspace{1mm} \noindent \textbf{Contributions.} We summarize our contributions as follows:
\begin{itemize}[leftmargin=*]
    \item We present Crystal, which, to our knowledge, is the first NC-style protocol that resists block withholding.

    \item We propose a two-round committee formation protocol 
    in the permissionless setting and also provide a reward mechanism to incentivize participants to follow the protocol.

    \item We provide a formal analysis of the benefits gained by introducing quorum certificates in the mining process, including the impact on selfish mining and double-spending attack.

    \item We analyze the protocol overheads. The results show that the overhead (i.e., storage and computation) of Crystal is negligible.

    \item We discuss how to generalize the usage of QC and also show how to apply the idea of Crystal to other Proof-of-X (PoX) blockchain protocols.
\end{itemize}

\vspace{1mm} \noindent \textbf{Roadmap.}
The rest of this paper is organized as follows.
We provide backgrounds for our work in Section~\ref{sec:background}.
We then introduce the system model and goals in Section~\ref{sec: system model}.
We present our Crystal protocol in Section~\ref{sec: protocol} and formally analyze its security in Section~\ref{sec: analysis}.
We evaluate Crystal in Section~\ref{sec: evaluation}.
We then briefly describe how to extend and generalize Crystal in Section~\ref{sec:disscussion}.
We provide the related work in Section~\ref{sec:related work} and conclude the paper in Section~\ref{sec:conclusion}.

\section{Background} \label{sec:background}
\subsection{Bitcoin and Nakamoto Consensus} \label{subsec:nc}
Bitcoin is the first permissionless blockchain that realizes a distributed ledger.
In Bitcoin, transactions are packaged into blocks.
Each block consists of a block header and a set of transactions.
The block header includes a hash value of the previous block (also called parent block), a timestamp, a Merkle root of transactions, a nonce (\jn{i.e., a random string~\cite{nakamoto2012bitcoin}}) and other metadata.
Blocks are linked together by hash references to form the chain structure.

Bitcoin relies on Nakamoto Consensus (NC) to maintain a consistent blockchain among participants. NC contains two components: the Proof-of-Work (PoW) and the longest chain rule. 
Specifically, each participant follows PoW to generate new blocks, in which a block contains a valid nonce such that the hash value of the entire block header is below a certain
threshold $D$ depending on the difficulty level—a system parameter that can be adjusted.
This puzzle-solving process is often referred to as \emph{mining} and so the participants are referred to as \emph{miners}.
\jn{The difficulty of PoW determines the chance of finding a new block on each try, and by adjusting the difficulty, the block generation rate can be controlled. For example, blocks are mined every 10 minutes on average in Bitcoin.}

Once a new block is produced, it ideally should reach all participants before the next block is produced.
If this happens to every block, then each participant in the system will observe the same chain of blocks.
In reality, multiple blocks may be produced around the same time, and so they may share a common parent block, leading to a fork.
To resolve such a fork, an honest miner always accepts the longest chain as the valid one and mines on it.
This is known as the longest chain rule.
Each block included in the longest chain can bring its miner a block reward as well as some fees for including transactions (also called transaction fees). 
These incentives encourage miners to devote their computational resources to the system.
Besides, each participant \emph{confirms} blocks (and their transactions) in the longest chain except for the last $k-1$ blocks, where $k$ is determined by the security level. For example, $k = 6$ in Bitcoin. This is also referred to as the $k$-deep confirmation rule.

\subsection{Attacks with Block Withholding}
\label{subsec:attacks}
Researchers have proposed several attacks against NC; all of them involve a key operation called block withholding. 
These attacks can be broadly categorized into two types: incentive-based and safety-based attacks. An attacker performs an incentive-based attack to gain more profits than they could by following the protocol.
In a safety-based attack, the attacker targets the safety of the system (e.g., reverting a block and the including transactions confirmed by participants).
A representative is the double-spending attack~\cite{nakamoto2012bitcoin}.
Sometimes, an attacker can combine these two types of attacks to obtain more profits (or to worsen safety) than that from a single attack~\cite{sapirshtein2016optimal,gervais2016security}.

\vspace{1mm}
\noindent \textbf{Incentive-based Attacks.}
NC is designed to fairly reward miners according to their contributions to the system (i.e., miners' revenue should be proportional to their devoted computation power).
However, the studies in~\cite{eyal2014majority, sapirshtein2016optimal, nayak2016stubborn, gervais2016security} show that malicious miners can gain more revenue than their fair share by deviating from the protocol.
This mining attack is called selfish mining.
In this attack, an attacker can withhold its newly generated blocks and then strategically publish these private blocks to invalidate honest blocks, thereby obtaining higher block rewards than it deserves.

\vspace{1mm}
\noindent \textbf{Safety-based Attacks.}
There are two influential safety-based attacks. The first one is the double-spending attack~\cite{nakamoto2012bitcoin}. In a successful attack case, an attacker first creates a payment using its digital token (e.g., Bitcoin) to the merchant for some service or goods. Once the transaction is confirmed in a public chain, and the service or goods are delivered, the attacker will invalidate the public chain by publishing its longer private chain, which contains a conflicting transaction to transfer the token back to itself.
In this way, a successful double-spending attack ruins the safety of the system.
Another attack is targeted at consistency, which is first proposed by Pass et al.~\cite{Pass2017}. In this attack, an attacker simply delays all honest blocks as long as possible to slow the growth rate of the honest chain.
Meanwhile, the attacker mines secretly on its private chain and strategically publishes private blocks to maintain a forking chain to prevent honest participants to have a consistent blockchain.

\section{System Model, Goals and Notations} \label{sec: system model}
In this section, we present the system model, system goals, and required cryptographic notations. In particular, we list used notations in Table~\ref{table:Notation}.

\begin{table}[t]
    \centering
    \caption{\textbf{Summary of Notations.}}
    \begin{tabular}{@{}ll@{}}
        \toprule[1pt]
        Notation                                             & Description                             \\
        \midrule
        $\mathcal{N}$                                        & Set of miners                           \\
        $n$                                                  & Number of miners                        \\
        $m$                                                  & \# of sub-committee members                 \\
        $f$                                                  & \# of Byzantine committee members       \\
        $W$                                                  & Sliding window size                     \\
        \jn{$D$}                                                  & \jn{Mining target}                       \\
        \jn{$d$}                                                & \jn{Committee election target}                      \\
        $QC$                                                 & Quorum certificate                      \\
        $\lambda$                                            & Mining process rate                     \\
        $\mathcal{M}_B$                                      & Block $B$'s committee                   \\
        $\Delta$                                             & Message delivery time bound             \\
        \textsf{H($\cdot$)}                                  & Cryptographic hash function             \\
        $\beta$                                              & Fraction of honest computation power    \\
        $\alpha$                                             & Fraction of Byzantine computation power \\
        $(pk_i ,sk_i)$                                       & Key pair for committee construction     \\
        $(\tilde{pk_i}, \tilde{sk_i})$                       & Key pair for signing messages           \\
        $\mathsf{Sign}(\cdot)/\mathsf{Verify}(\cdot)$        & Signature generation/verification       \\
        $\mathsf{VRFProve}(\cdot)/\mathsf{VRFVerify}(\cdot)$ & VRF generation/verification             \\
        \bottomrule[1pt]
    \end{tabular}
    \label{table:Notation}
\end{table}

\subsection{System Model} \label{subsec: system-model}
We consider a set of $n$ participants, denoted by $\mathcal{N}$.
Each participant $i \in \mathcal{N}$ can generate two public/secret key pairs:
a pair $(pk_i ,sk_i)$ of a unique signature scheme~\cite{uniqueKey} for committee formation, and
a pair $(\tilde{pk_i}, \tilde{sk_i})$ of a regular digital signature scheme for signing and verifying messages.
There is no trusted public key infrastructure.\
\jn{
The participants in the set $\mathcal{N}$ can be divided into three types: altruistic, rational, and Byzantine, following the BAR model~\cite{BAR2015}.
Specifically, altruistic participants (also called honest participants in blockchains~\cite{garay2015bitcoin, Pass2017}) always follow the protocol.
Honest participants may represent enthusiasts who support the system or have faith in long-term returns if the system operates smoothly.
Therefore, honest participants are conservative on Byzantine behaviors to break the system or short-term deviations from the protocol for more profits, both of which will affect the long-term benefits.
Rational participants are profit-driven and aim at maximizing their expected utility from participation.
Rational participants may represent external speculators who may deviate from the protocol for short-term profits.
Byzantine participants can behave arbitrarily from the protocol for any reason.
All Byzantine participants are assumed to be controlled by a single attacker.}

\noindent \textbf{Network Model.}
We consider a permissionless setting in which participants may join or leave the system at any time.
\jn{Following NC-style protocols~\cite{garay2015bitcoin,Pass2017}, we assume the network is synchronous.
That is, whenever an honest participant transmits a message (e.g., a block, a transaction, or a signature), it takes up to $\Delta$ seconds for the message to reach all other honest participants.}
This assumption is acceptable since it is only used for simplifying analysis, but not for our protocol design.

\noindent \textbf{Mining Model.}
The block mining process of all miners can be further modeled as a Poison process with the rate $\lambda$~\cite{bagaria2019deconstructing,Race2020}.
Each participant $i$ has a limited amount of computation power, measured by the number of hash computations that it can run per second.
The computation power controlled by honest participants (resp. Byzantine and rational participants) is $\beta$ (resp.  $\alpha$) fraction of the total computation power.
Clearly, we have $\alpha + \beta = 1$.
At any time, honest participants are assumed to control the majority of computation power, i.e., $\beta > 1/2$.
Let $\lambda_h$ (resp. $\lambda_a$) denote the block mining rate of honest miners (resp. Byzantine and rational participants).
We have $\lambda_h = \beta \lambda$ and $\lambda_a = \alpha \lambda$.
Besides, as $\beta > 1/2$, we have $\lambda_h > \lambda_a$.
Blocks mined by honest participants are referred to as honest blocks.

\subsection{Design Goals}

\jn{
    We aim at building a permissionless blockchain protocol that satisfies the following properties with an overwhelming probability under the system model in Section~\ref{subsec: system-model}:
    \begin{itemize} [leftmargin=*]
        \item \emph{Safety:} No two honest participants commit two different blocks at the same height.
        \item \emph{Liveness:} If an honest participant receives a valid transaction, then the transaction will eventually be included in all honest participants' committed blocks.
        \item \emph{Incentive compatibility:} any subset of colluding rational participants cannot gain more revenue by deviating from the protocol.
    \end{itemize}
    The safety property also corresponds to the consistency property~\cite{Pass2017, Kiffer2018}. The safety and liveness properties should be held in the presence of Byzantine participants, while the incentive compatibility property should be guaranteed with rational participants.}

\subsection{Cryptographic Notations}
The system employs standard cryptographic techniques, including the digital signature and the verifiable random functions (VRF)~\cite{micali1999verifiable, vrf}.
\jn{The collision-resistant cryptographic hash function $\mathsf{H}(msg)$ outputs a hash value given the input message $msg$}.
The digital signature scheme contains two functions:
\jn{$\mathsf{Sign}(sk, msg)$ that takes secret key $sk$ and message $msg$ as input and outputs signature $\sigma$ on $msg$; and
$\mathsf{Verify}(pk, \sigma, msg)$ that takes public key $pk$, signature $\sigma$ and message $msg$ as input and outputs $1$ if the $\sigma$ is valid and $0$ if not.}
The VRF scheme contains two functions:
$\mathsf{VRFProve}(x, sk)$ that takes secret key $sk$ and input $x$ as input and outputs a pseudorandom output $y$ and a proof $\pi$; and
$\mathsf{VRFVerify}(x, y, \pi, pk)$ that takes input $x$, output $y$, proof $\pi$ and public key $pk$ and outputs $1$ if the $y$ is valid and $0$ if not. 

\section{Crystal Design} \label{sec: protocol}
In this section, we provide the design of Crystal.
We start from a strawman design that is intuitive but fails to address some challenges, then improve the strawman design step-by-step to address these challenges, yielding Crystal.

\subsection{Strawman Design}\label{subsec:strawman}
\jn{We first provide a strawman design that adds mining transparency to NC.
The setting of the protocol is the same as Bitcoin except that there is a fixed committee of $m$ members. At most $f$ ($f \leq \left \lfloor \frac{m}{2} \right \rfloor)$ of committee members are Byzantine.
Besides, public keys of the committee members are known, e.g., are recorded on the blockchain.} 
On receiving a valid block $B$, a committee member will sign the block $B$ and broadcast its signature to all participants.
\jn{When participants obtain at least $f+1$ distinct signatures on $B$, they combine signatures into a QC, and start mining the next block $B^\prime$ whose block header contains $B$'s block hash and QC.}
The rest of the mining process is the same as NC.
\jn{Here, including the QC in the block header is the key to achieving mining transparency.
To obtain sufficient signatures, the block proposer has to publish the block to at least $f+1$ committee members, where at least one committee members are honest and will propagate the block to other participants.}

However, challenges arise to realize such a protocol based on NC.
First, the committee needs to be carefully chosen in a permissionless setting that is subject to Sybil attack~\cite{sybil2002}.
Second, the committee is an easy target for attacks (e.g., bribery, Eclipse attacks~\cite{eclipse2015}, and DoS attack~\cite{chen2021efficient}) if the identity is public.
Third, it lacks an incentive mechanism to encourage committee members to behave correctly.
Crystal extends the strawman design to address the above three challenges.
Specifically, Crystal addresses the first two challenges by employing a novel two-round committee formation protocol in Section~\ref{subsec: committee} and addresses the last challenge by employing a reward mechanism for committee members in Section~\ref{subsec: reward}.




\subsection{Data Structure}

\noindent \textbf{Block.}
Transactions issued by clients are batched into blocks.
A block contains a block header, a set of transactions, and a quorum certificate $QC$.
A block header inherits most of the fields from the Bitcoin header and also contains several new fields.
Specifically, a block header consists of the following fields:

\begin{itemize} [leftmargin=*]
    \item \textbf{$\mathcal{\tau}$}: the root of the Merkle tree~\cite{Merkle} containing a set of transactions,
    \item \jn{\textbf{$h$}: is the hash of the parent block},
    \item \textbf{$\mathcal{\tau}_{QC}$}: the root of the Merkle tree containing all certificate proofs $\mathcal{K}$ of the parent block,
    \item \textbf{$\tilde{pk_i}$}: the block owner's public key for receiving rewards,
    \item \textbf{$pk$}: the block owner's public key generated by the unique signature scheme for committee formation,
    \item \jn{\textbf{$D$}: is the value determining the difficulty of finding new blocks,}
    \item \jn{\textbf{$e$}: is a Unix timestamp,}
    \item \textbf{$\eta$}: is a valid nonce, i.e., a random string so that the hash value of $B$'s header is less than the target $T$.
\end{itemize}

\noindent \textbf{Blocktree.}
Each participant locally maintains a collection of blocks, which are either mined by itself or received from other participants.
\jn{These blocks are linked by hash references, forming a rooted tree called \emph{blocktree}.}
A block's height is the shortest path length between it and the genesis block $\mathcal{G}_0$.
The height of the genesis block is zero.

\noindent \textbf{Quorum Certificate.}
Each block in a valid blocktree determines a unique committee, which will be described later in Section~\ref{subsec: committee}.
Each member of the committee could have multiple membership shares.
\niu{In Crystal, the expected committee size is $m$, and a quorum certificate has to contain votes from members that have at least $\lfloor m/2 \rfloor$ + 1 memberships.}
A block is certified if associated with a valid QC.

\subsection{Mining Process}
\jn{Algorithm~\ref{alg: mining} presents the pseudo-code of the mining process in Crystal. Participants use PoW to mine blocks and accept blocks by the longest chain rule. Specifically, before mining a block, a participant first chooses a set of transactions \textit{Txs} and prepares a block header \textit{hdr} (Line 4-5). Then, the participant tries to solve a PoW puzzle by constantly changing the nonce field until a valid nonce is found (Line 5-7). Here, the valid nonce ensures that the hash value of the header is smaller than the target $D$. After that, the participant will create a new block (containing the block header \textit{hdr}, transaction set \textit{Txs}, and other metadata), process the block (introduced shortly), and broadcast the block to the network (Line 8-10).}

\jn{
When receiving a new block $B$, a participant has to process the block (Line 11-19). The participant first checks the validity of the block, which contains four requirements: $1$) $B$ should contain a valid nonce; $2)$ $B$'s parent block should be locally available; $3$) $B$'s transactions should be valid; and $4$) $B$ should contain a valid $QC$ for its parent block. 
If all requirements are met, then the participant will append the block $B$ to its local blocktree.
After that, the participant will update the last block in the longest chain (Line 18) and check whether it should vote for the block (Line 19). Note that if there are two longest chain branches, miners follow the uniform tie-breaking rule, i.e., randomly choosing the last block at one of them. The voting function will be introduced shortly.}

\vspace{1mm} \noindent \jn{\textbf{Mining difficulty adjustment.} Crystal follows a similar mining difficulty adjustment algorithm (DAA) as Bitcoin~\cite{nakamoto2012bitcoin} to maintain the average block interval at about 10 minutes. 
In particular, the difficulty is periodically adjusted, i.e., every 2016 blocks (i.e., roughly two weeks), by deterministically changing the target value $D$ by the following equation:
\[D_{new} = D_{old} \times \frac{\text{Time of last 2016 blocks}}{2016 \times \text{10 minutes}},\]
where block time is computed by the Unix timestamp included in the block header.}

\begin{algorithm}[t]
    \small
    \caption{\jn{The block mining process}} \label{alg: mining}
    \begin{algorithmic}[1]
        \Statex \textbf{Local State}:
        \State $M \leftarrow \{\mathcal{G}_0 \}$ \quad \quad \quad $\rhd${ the set of blocks}
        \State $B^{\prime} \leftarrow \mathcal{G}_0$ \quad \quad \quad \quad $\rhd$ { the last block in the longest chain}
        \Statex
        \State \textbf{function} $\textsf{MineBlock}()$
        \State\hspace{\algorithmicindent}
        $\textit{Txs} \leftarrow$ \textsf{getTransactions}()
        \State\hspace{\algorithmicindent}
        $\textit{hdr} \leftarrow$ \textsf{createHeader}($B^{\prime}$, \textit{Txs})
        \State\hspace{\algorithmicindent}
        \textbf{While} $\textsf{H}(\textit{hdr}) > D$ \textbf{do} 
        \State \hspace{\algorithmicindent} \hspace{\algorithmicindent}
        $\textit{hdr.nonce} \leftarrow$ \textsf{getNewNonce}()
        \State \hspace{\algorithmicindent} 
         $B \leftarrow$  \textsf{createBlock(\textit{hdr},\textit{Txs})};
        \State \hspace{\algorithmicindent}
        \textsf{ProcessBlock}($B$)
        \State \hspace{\algorithmicindent}
        \textsf{Broadcast(B)}
        \Statex
        \State \textbf{function} \textsf{ProcessBlock}($B$)
        \State\hspace{\algorithmicindent}
        verify that \textsf{H}($B$.\textit{hdr}) $<$ D
        \State\hspace{\algorithmicindent}
        verify that $\textit{B.hash} = \textsf{H}(A)$ for block $A \in M$
        \State\hspace{\algorithmicindent}
        verify that $\textit{B.Txs}$
        \State\hspace{\algorithmicindent}
        verify that $\textit{B..\textit{hdr}.$\mathcal{\tau}_{QC}$}$
        \State\hspace{\algorithmicindent}
        if (any above verification fails) \textbf{then return}
        \State\hspace{\algorithmicindent}
        $M \leftarrow M \cup \{B \}$
        \State \hspace{\algorithmicindent}
        $B^{\prime} \leftarrow \textsf{getLastBlockinLongestChain}(M)$
        \State \hspace{\algorithmicindent}
        \textsf{voteForBlock}($B$) 
    \end{algorithmic}
\end{algorithm}

\subsection{Committee Formation} \label{subsec: committee}
We first provide an overview of the committee formation and then present the detailed design. 

\vspace{1mm} \noindent \textbf{Overview.}
In the strawman design, assuming the existence of a fixed committee is unrealistic in the permissionless setting of the blockchain.
To address the issues, we propose a two-round committee formation protocol, as shown in Figure~\ref{fig: overview}.
First, the latest $W$ blocks form a sliding window.
Each block in this sliding window can bring its owner a membership share, and participants with membership shares form a committee.
Whenever a new block is appended to the chain, the sliding window moves one block forward, and the share of the old block beyond the sliding window expires.
Since the committee members are chosen because of their dedicated resources, this approach is resilient to Sybil attacks.
Additionally, the sliding window guarantees that the members of the committee are recently active participants who have a higher chance to stay online.
This is referred to as the first-round election.

With the committee, a second-round election is used to construct the final subcommittee for each block in a random and non-interactive way.
Informally, the secondary election works as sortition for each block, which chooses a random subset of block owners in the sliding window according to their block shares (i.e., the number of blocks in the sliding window belonging to the same participant).
The randomness input for each sortition is the hash value of the new block.
As participants cannot predict the output of the cryptographic hash function, the resulting input is unpredictable.
\niu{Note that when the length of the blockchain is less than the sliding window size $W$, Crystal runs as the vanilla Nakamoto Consensus until the bootstrapping phase completes.} 

\vspace{1mm} \noindent \textbf{Detailed Design.}
Algorithm~\ref{alg: election} provides the specification of the two-round committee election protocol.
Here, participants check whether to vote for blocks that have passed verification rules (see Algorithm~\ref{alg: mining}).

\begin{figure}[t]
    \centering
    \includegraphics[width=3.5in]{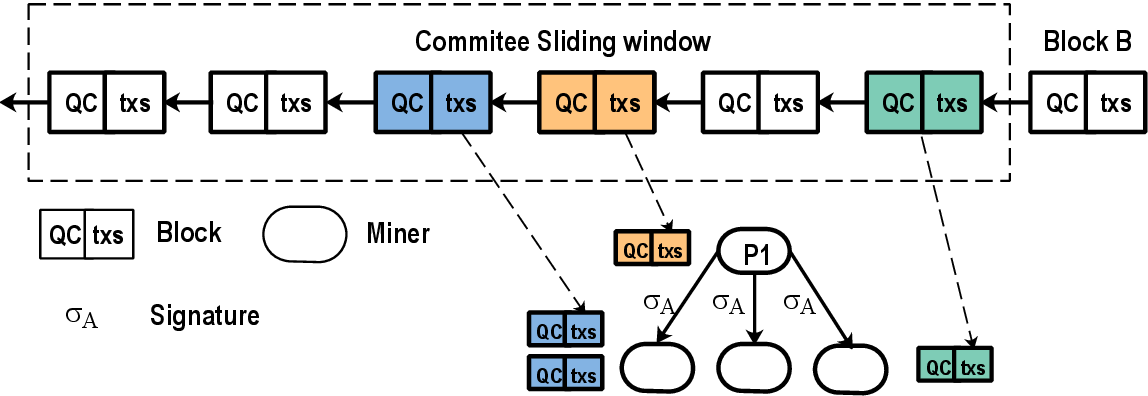}
    \caption{\textbf{Overview of the Crystal design.} The colored blocks are randomly selected in the recent block window and the owners of the colored blocks form the committee for block $B$. Upon receiving block $B$, participant $P1$ knows that it is in the committee, then it will create and broadcast a signature for block $B$.}
    \vspace{-2mm}
    \label{fig: overview}
\end{figure}

Upon receiving a new block $B$, the participant $i$ triggers the function $\mathsf{voteForBlock}(B)$, which determines if it is eligible to vote for the block $B$.
\niu{Note that a participant can vote for multiple blocks at the same height.}
Specifically, participant $i$ iterates over the sliding window of the last $W$ blocks.
Note that to bootstrap, Crystal runs as the vanilla Nakamoto Consensus for the first $W$ blocks.
In every iteration, if the block is mined by participant $i$ itself, it computes $\textsf{VRFProve}(B.hash || index, sk)$ (Line 8-9).
Here, $B.hash$ is the hash of block $B$'s header, $index$ is the position index of $B^{\prime}$ in the sliding window, and $sk_i$ is participant $i$'s secret key.
If the result is smaller than the election target $d$ (Line 10), 
the participant $i$ creates a vote $eCert$ containing $B.hash$, its public key $pk_i$, the block index $index$, the VRF output $y$ and proof $\pi$ (Line 11), adds $eCert$ to the vote set $QC_B$, and broadcasts $eCert$ to other participants (Line 13).
\jn{Here, $d$ determines the probability that a block in the sliding window brings its owner a membership share and is set to guarantee that the \emph{expected} number of membership shares of all participants is $m$. 
For example, if the pseudorandom output of VRFProve() function is uniformly distributed on $[0, 2^{256}]$, then $d = m/W \times 2^{256}.$ 
The parameters $W$ and $d$ are fixed and hard-coded in the system. Section~\ref{subsec: parameter} shows how to determine the value of these parameters.}

Upon a new vote $eCert$, a participant triggers the function $\mathsf{onReceiptVote}(eCert)$, which verifies $eCert$ and assembles the QC. 
If $eCert$ passes the validity check (Line 21, 24-25) and the generator of $eCert$ is chosen by the two-round election (Line 22), then $eCert$ is added to the vote set $QC_B$ on $B$.
If the vote set $QC_B$ contains more than $m/2$ valid votes on $B$, then these votes constitute a valid QC of $B$ (Line 27-28).

\begin{algorithm}[t]
    \caption{Crystal's two-round committee election protocol from a participant $i$'s perspective.}\label{alg: election}
    \begin{algorithmic}[1]
        \Statex
        \textbf{Parameter}:
        \State $d$ \jn{\quad \quad \quad \quad$\rhd$ Election difficulty}
        \Statex
        \textbf{Local State}:
        \State $(pk_i, sk_i)$ \jn{\quad $\rhd$ The $i$-th participant's key pair}
        \Statex
        \Statex \jn{$\rhd$\quad Triggered upon every new block $B$}
        \State \textbf{function} \textsf{voteForBlock}($B$):
        \State\hspace{\algorithmicindent}
        $QC_B \gets \{\}$
        \State\hspace{\algorithmicindent}
        $B^{\prime} \leftarrow \textsf{getParentBlock}(B)$
        \State\hspace{\algorithmicindent}
        \textbf{for} $j \in [1, W]$
        \State\hspace{\algorithmicindent}\hspace{\algorithmicindent}
        \textbf{if} $B^{\prime}.pk = pk_i$ \textbf{then}
        \State\hspace{\algorithmicindent}\hspace{\algorithmicindent}\hspace{\algorithmicindent}
        $index \leftarrow j$
        \State\hspace{\algorithmicindent}\hspace{\algorithmicindent}\hspace{\algorithmicindent}
        $ (y, \pi) \leftarrow  \textsf{VRFProve}(B.hash || index, sk)$
        \State\hspace{\algorithmicindent}\hspace{\algorithmicindent}\hspace{\algorithmicindent}
        \textbf{if} $y < d$ \textbf{then}
        \State\hspace{\algorithmicindent}\hspace{\algorithmicindent}\hspace{\algorithmicindent}\hspace{\algorithmicindent} 
        $eCert\leftarrow \left \langle pk_i, B.hash, index, y, \pi \right \rangle $
        \State\hspace{\algorithmicindent}\hspace{\algorithmicindent} \hspace{\algorithmicindent}\hspace{\algorithmicindent}
        $QC_B \gets QC_B \cup eCert$
        \State\hspace{\algorithmicindent}\hspace{\algorithmicindent} \hspace{\algorithmicindent}\hspace{\algorithmicindent}
        \textsf{Broadcast($eCert$)}
        \State\hspace{\algorithmicindent}\hspace{\algorithmicindent}
        $B^{\prime} \leftarrow \textsf{getParentBlock}(B^\prime)$
        \Statex
        \Statex \jn{$\rhd$ Triggered upon every new vote $eCert$}
        \State \textbf{function} \textsf{onReceiptVote}($eCert$):
        \State\hspace{\algorithmicindent}
        $index \leftarrow eCert.index$
        \State\hspace{\algorithmicindent}
        $hash \leftarrow eCert.hash$
        \State\hspace{\algorithmicindent}
        $y \leftarrow eCert.y$
        \State\hspace{\algorithmicindent}
        $pk \leftarrow eCert.pk$ \
        \State\hspace{\algorithmicindent}
        $\pi \leftarrow eCert.\pi$
        \State\hspace{\algorithmicindent}
        \textbf{if} $\nexists B | \textsf{H}(B) = hash$ \textbf{then return}
        \State\hspace{\algorithmicindent}
        \textbf{if} $y < d \wedge 1 \leq index \leq W $ \textbf{then}
        \State\hspace{\algorithmicindent} \hspace{\algorithmicindent}
        $B^{\prime} \leftarrow \textsf{findBlockByIndex}(B, index)$
        \State\hspace{\algorithmicindent} \hspace{\algorithmicindent}
        \textbf{if} $pk = B^{\prime}.pk$ \textbf{then}
        \State\hspace{\algorithmicindent} \hspace{\algorithmicindent} \hspace{\algorithmicindent}
        \textbf{if} $\mathsf{VRFVerify}(hash || index, y, \pi, pk)$ \textbf{then}
        \State \hspace{\algorithmicindent}\hspace{\algorithmicindent}\hspace{\algorithmicindent}\hspace{\algorithmicindent}
        $QC_B \gets QC_B \cup eCert$
        \State\hspace{\algorithmicindent}
        \textbf{if} $\left| QC_B \right| > m/2$ \textbf{then}
        \State\hspace{\algorithmicindent}\hspace{\algorithmicindent}
        \textsf{$QC_B$ becomes a valid QC}
    \end{algorithmic}
\end{algorithm}

\vspace{1mm} \noindent \textbf{Comparison with Existing Committee Construction.}
Existing works~\cite{ByzCoin,pass2017hybrid,elastico} leverage a one-round method to form the committee.
Specifically, the participants that have blocks in the sliding share window form the committee for all subsequent blocks.
Despite the simplicity, this committee formation approach has two issues.
First, if an attacker controls the majority of membership shares, it can simply refuse to sign for any new honest blocks extending the blockchain.
As a result, no honest blocks can be certified and further extended.
In such a case, Byzantine participants will fully control the system.
Second, the identities of the committee members are known to the public, so they are subject to DoS or Eclipse attacks.
Once the attacker controls or isolates participants with the majority of the membership shares, it can also control or stall the system as explained before. To counter this issue, the simplest way is to increase the size of the sliding window so that the probability that the majority of membership shares are controlled by the attacker (or members controlling the majority of membership shares are under attack) is overwhelmingly small.
However, the large window size leads to a large overhead of QC formation.

By contrast, the two-round committee formation protocol enables the system to have better security and efficiency.
First, the second round is non-interactive, meaning that participants can independently determine if they are chosen for the committee (Line $11$).
The elected committee members are unknown to the attacker.
This allows us to choose a larger sliding window size and a smaller committee size, making it hard to attack the committee of a target block.
Second, as the inputs (i.e., blocks' hash values) of the sortition algorithm are different, different blocks will have different committees.
This implies that the system is more robust to the failure of committee formation, and so allows a higher failure probability.
This further means that the committee size could be smaller, leading to better efficiency.

\subsection{Reward Mechanism} \label{subsec: reward}

Crystal demands a reward mechanism that incentivizes participants to 1) participate in Crystal, and 2) follow the protocol as specified.
Participants' actions in Crystal can be classified into two categories: \emph{mining-related} and \emph{voting-related}.
Rewarding mining-related actions encourages rational participants to participate in the mining process.
Rewarding voting-related actions encourages rational participants who have blocks in the sliding window to stay online.
In addition, the reward mechanism needs to incentivize committee members to respond quickly, since participants would start mining the next block once they collect a quorum number of votes.
Moreover, the reward mechanism needs to incentivize participants to include more signatures as a QC.

Algorithm~\ref{alg: reward} provides the specification of Crystal's reward mechanism.
Apart from common reward mechanisms in permissionless blockchains (i.e., block rewards and transaction fees in Line 1), Crystal provides the following additional rewards.
First, a participant who has produced a block $B$ in the main chain additionally receives an inclusion reward $R_i$ for each vote included in $B$ (Line 5).
Second, if a participant is elected into the committee of $B$ and has voted for $B$, it receives a voter reward $R_v$ multiplied by the number of its votes (Line 3-4).
Here, a participant uses its public key of a regular digital signature scheme to receive rewards.

Note that participants may have different membership shares for different forking blocks, and so can obtain different voting rewards.
Although a rational participant would prefer the block that brings it higher rewards to be included in the blockchain, it would vote for all of them.
This is because blocks' committees are different, and members in the unflavored blocks' committees would vote for these blocks.
If a rational participant does not vote for some forking blocks that it is entitled to, it would suffer from economic loss once these unflavoured blocks win.
In addition, Crystal can adopt suitable parameters to guarantee that a subset of rational participants' votes does not affect the acceptance of blocks.

\begin{algorithm}[t]
    \caption{Crystal's reward mechanism.}\label{alg: reward}
    \begin{algorithmic}[1]
        \Statex
        \textbf{function} \textsf{getRewards}($B$)
        \State \hspace{\algorithmicindent}
        \textcolor{gray}{Pay block reward and tx fees in $B$ to $B.\tilde{pk}$}
        \State \hspace{\algorithmicindent}
        \textbf{for} each $eCert$ in $B.QC$
        \State \hspace{\algorithmicindent} \hspace{\algorithmicindent}
        $\tilde{pk} \leftarrow \textsf{findRewardPK}(eCert.pk)$
        \State \hspace{\algorithmicindent} \hspace{\algorithmicindent}
        Pay $R_v$ to $\tilde{pk}$ \jn{\quad \quad $\rhd$ vote reward}
        \State \hspace{\algorithmicindent} \hspace{\algorithmicindent}
        Pay $R_i$ to $B.\tilde{pk}$ \jn{ \quad $\rhd$ vote inclusion reward}
    \end{algorithmic}
\end{algorithm}

\section{Analysis} \label{sec: analysis}
In this section, we first analyze the security of committees.
Then, we show that Crystal can effectively resist block withholding misbehavior.
Finally, we study how Crystal thwarts the selfish mining attack and the double-spending attack and guarantees safety against all attacks.

\subsection{Security of Committees} \label{subsec: committee security}
In Crystal, each block is accompanied by a committee, which is randomly constructed from participants that have blocks in the sliding window. Besides, Crystal should guarantee the following good committee property of blocks.

\jn{\begin{definition}[Good Committee] For a given block $B$ and its committee $\mathcal{M}_B$, we say that $\mathcal{M}_B$ has good committee property if
    the honest membership shares are greater than $m/2$ and the membership shares controlled by the attacker are no greater than $m/2$.
\end{definition}}

The good committee property implies that honest participants control the majority of membership shares and 
Also, this property implies the following two conditions:
$i)$ if block $B$ is a malicious block, the attacker cannot have a QC for this block without receiving signatures from some honest
member in $\mathcal{M}_B$; $ii)$ if block $B$ is an honest block, all honest participants can have a QC for this block even if the attacker does not sign $B$.
It turns out that by choosing suitable parameters, $\mathcal{M}_B$ has good committee property with high probability.

\begin{lemma} \label{lem: majority}
    Given $0 < \epsilon < 1$, for any sufficiently large $m$, $\mathcal{M}_B$ has good committee property except for probability at most $\epsilon$.
\end{lemma}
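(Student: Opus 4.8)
The plan is to reduce the statement to a two-sided Chernoff bound on two (essentially independent) binomial random variables. Fix a block $B$ and its committee $\mathcal{M}_B$, and consider the $W$ blocks of the sliding window anchored at $B$'s parent. Since the VRF output is modelled as uniform on $[0,2^{256})$ and the election target is $d = (m/W)\cdot 2^{256}$, each window block independently yields a membership share with probability $p = m/W$; and by the mining model a window block is owned by an honest participant with probability $\beta$ and by the attacker (or a rational participant) with probability $\alpha = 1-\beta$. Hence the number $X_h$ of honest shares is $\mathrm{Bin}(W,\beta p)$ with $\mathbb{E}[X_h]=\beta m$, and the number $X_a$ of adversarial shares is $\mathrm{Bin}(W,\alpha p)$ with $\mathbb{E}[X_a]=\alpha m$. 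By definition $\mathcal{M}_B$ is a good committee exactly when $X_h > m/2$ and $X_a \le m/2$, so it suffices to show $\Pr[X_h \le m/2] + \Pr[X_a > m/2] \le \epsilon$ for all sufficiently large $m$.

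For the first term, since $\beta > 1/2$ we may write $m/2 = (1-\theta)\,\mathbb{E}[X_h]$ with $\theta = 1 - \tfrac{1}{2\beta} \in (0,1)$ a constant depending only on $\beta$; the multiplicative Chernoff lower tail then gives $\Pr[X_h \le m/2] \le \exp\!\big(-\theta^2\beta m/2\big)$. For the second term, since $\alpha < 1/2$ we may write $m/2 = (1+\gamma)\,\mathbb{E}[X_a]$ with $\gamma = \tfrac{1}{2\alpha} - 1 > 0$ a constant; the multiplicative Chernoff upper tail gives $\Pr[X_a > m/2] \le \big(e^{\gamma}/(1+\gamma)^{1+\gamma}\big)^{\alpha m} = \exp(-c\,m)$ for some $c = c(\alpha)>0$. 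Adding the two estimates, $\Pr[\mathcal{M}_B\text{ not good}] \le \exp(-c_1 m) + \exp(-c_2 m)$ with $c_1,c_2>0$ depending only on $\alpha,\beta$, which is at most $\epsilon$ once $m \ge m_0(\epsilon,\alpha,\beta)$; this is exactly the claim.

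Two points need care, and the first is the real obstacle. (i) If $B$ is a malicious block the attacker controls $B.hash$, so it can re-draw the entire committee lottery by grinding over candidate blocks; the ``fixed $B$'' argument above must therefore be combined with a union bound over all blocks the attacker can feasibly produce. Because every candidate block still requires a valid PoW, the number of such attempts in any bounded interval is only polynomial (indeed tiny) in the security parameter, so the union bound multiplies the failure probability by a sub-exponential factor and the conclusion is unchanged — but this is precisely why $m$ must be chosen large relative to the attacker's hashing budget, and I expect the bulk of the rigor to go here. (ii) The independence of the per-block share events, together with the fact that the attacker can neither force its own window blocks to always win nor anticipate the outcomes of honest blocks, relies on modelling the VRF as a random oracle with unique, publicly verifiable outputs; and the idealisation that a window block is honest with probability $\beta$ independently is the standard abstraction of the mining Poisson process, which could be replaced by a chain-quality statement at the cost of extra bookkeeping without changing the essential Chernoff estimate above.
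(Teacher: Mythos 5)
Your proof takes essentially the same route as the paper: model the shares as two binomials $\mathrm{Bin}(W,\beta p)$ and $\mathrm{Bin}(W,\alpha p)$ with $p=m/W$, union-bound the two bad events, and apply standard Chernoff tails to get failure probability decaying exponentially in $m$ (the paper states this step without writing out the constants, which you do explicitly). Your additional remarks on hash-grinding and the random-oracle idealisation go beyond what the paper addresses, but the core argument is the same.
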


\begin{proof}
    The committee construction can be modeled as a random sampling problem with two possible independent outcomes: selected and not-selected.
    The probability for each block's owner to be selected into the committee is $p = m / W$, where $m$ is the expected sub-committee size, and $W$ is the sliding window size (See Section~\ref{subsec: committee}).
    Note that if $W$ is large enough, the probability of picking an honest (or dishonest) participant is $\beta$ (or $\alpha$).
    Let $Y$ (resp, $X$) denote the random variable corresponding to the number of honest (resp, dishonest) membership shares. Let $A$ and $B$ denote the events that $\{Y: 0 \leq Y \leq \left \lfloor m/2 \right \rfloor\}$ and $\{X: \left \lceil m/2 \right \rceil \leq X \leq W\}$, respectively.
    The probability $\Pr \left [A \cup B \right ]$ that the elected committee does not satisfy the good committee property is then
    \begin{equation}
        \begin{split}
            \Pr \left [A \cup B \right ] &= \Pr \left [A \right ] + \Pr \left [B \right ] -  \Pr \left [A \cap B \right ] \\
            & \leq \Pr \left [A \right ] + \Pr \left [B \right ] \\
            & = \sum_{j=\left \lceil m/2 \right \rceil}^{W}{\Pr \left [ X = j \right ]} +  \sum_{j=0}^{\left \lfloor m/2 \right \rfloor}{\Pr \left [ Y = j \right ]}.
        \end{split}
        \label{eq:failure1}
    \end{equation}
    Since $Y$ follows a Binomial distribution, we have
    \begin{equation}
        \Pr \left [ Y = j \right ] = {\binom{W}{j} (\beta p)^{j}(1- \beta p)^{W-j}},
        \label{eq:failure3}
    \end{equation}
    where $\beta p$ is the probability that an honest membership share is produced.
    Similarly, we have
    \begin{equation}
        \Pr \left [ X = j \right ] =
        {\binom{W}{j} (\alpha p)^{j}(1- \alpha p)^{W-j}}.
        \label{eq:failure2}
    \end{equation}

    By using standard Chernoff bounds, we can show that the failure probability $\Pr \left [A \cup B \right ]$ decreases exponentially in $m$.
    This implies that $\Pr \left [A \cup B \right ]$ can be made smaller than $\epsilon$ for sufficiently large $m$.
\end{proof}
We will apply this lemma to study the impact of the failure probability $\epsilon$ on block withholding behaviors in Section~\ref{subsec: withheld} as well as to choose suitable  $\epsilon$ and subcommittee size $m$ in Section~\ref{subsec: parameter}.

\begin{table}[t]
    \centering
    \caption{The probability $P_l$ of successfully withholding $l$ blocks in Crystal. Let $T_{f}$ denote the average failure time.}
    \begin{tabular}{@{}ccccccccc@{}}
        \toprule[1pt]
        \multirow{2}{*}{} & \multicolumn{2}{c}{$\epsilon = 10^{-2}$} &           & \multicolumn{2}{c}{$\epsilon = 10^{-3}$} &           & \multicolumn{2}{c}{$\epsilon = 10^{-4}$}                            \\
        \cmidrule{2-3}
        \cmidrule{5-6}
        \cmidrule{8-9}
                          & l =2                                     & l =3      &                                          & l =2      & l =3                                     &  & l =2      & l = 3     \\
        \toprule[1pt]
        $P_l$             & $10^{-2}$                                & $10^{-4}$ &                                          & $10^{-3}$ & $10^{-6}$                                &  & $10^{-4}$ & $10^{-8}$ \\

        $T_{f}$           & $16.6$h                                  & $9.9$w    &                                          & $6.9$d    & $1.9$y                                   &  & $9.9$w    & $190.2$y  \\
        \bottomrule[0.9pt]
    \end{tabular}
    \label{table:withhold}
\end{table}

\subsection{Resisting Block Withholding} \label{subsec: withheld}
The main goal of Crystal is to resist block withholding behaviors with high probability.
Informally, when an attacker finds a block, it has to include a QC of this block to mine the next block.
A valid QC has to include the signatures of committee members who control the majority of memberships.
If the good committee property of a block holds, the attacker cannot create a QC for this block by itself and has to publish the block to some honest committee members to obtain their signatures.
Otherwise, the attacker can withhold a certified block and continue to mine the next block.
Let $N_{A}$ denote the blocks secretly withheld by the attacker.
Since the probability that a block's committee does not have the good committee property is $\epsilon$, the probability $P_l$ that the attacker can withhold $l$ consecutive blocks is:
\[P_l = \Pr \left [ N_{A} = l \right ] = \epsilon^l (1-\epsilon).\]

Table~\ref{table:withhold} displays the probability $P_l$ of successfully withholding $l$ blocks in Crystal.
To give a concrete example, let us assume that the failure probability is $10^{-4}$ (which is the setting used in Crystal)
and committees operate independently of each other. Since blocks in Crystal are mined every ten minutes on average, it would take $9.9$ weeks to have two blocks whose committee do not have the good committee property, and an attacker who controls $30\%$ of computation power would take about $33$ weeks on expected to successfully withhold two blocks (since the expected block interval for the attackers' blocks is $33.3$ minutes.).
This example shows that block withholding behaviors can be effectively resisted with high probability in Crystal.
With this, we will next show that Crystal can thwart the selfish mining and double-spending attack.

\subsection{Selfish Mining Attack} \label{subsec:incentive}
In this section, we demonstrate quantitatively how Crystal thwarts the most famous incentive-based attack: selfish mining \cite{eyal2014majority, sapirshtein2016optimal,nayak2016stubborn}.
In particular, we prove that any subset of colluding rational participants cannot gain more from deviating from the protocol when the majority of computation power is controlled by honest participants.

Before analysis, We first provide a simple example, which can show the benefits of eliminating block withholding in defending the selfish mining attack.
Figure~\ref{fig: selfish} illustrates two simple cases of the selfish mining attack in NC and Crystal.
Suppose at the starting time, both the attacker and honest participants have the same blockchain (hidden at the left of the vertical dashed line).
The case in NC is already explained in Section~\ref{sec: intro}, in which the attacker publishes two withheld blocks to invalidate one honest block. In this way, the attacker not only gets two block rewards by the longest chain rule but also makes the honest block lose one block reward.
Similarly, in Crystal, the attacker first mines block $A_1$ and withholds it in order to create a fork.
Since the attacker cannot get a QC for block $A_1$ (when the good committee property holds),
it cannot mine the next block, thereby losing the opportunity of mining more blocks (e.g., the red dashed block $A_2$ in Figure~\ref{fig: selfish}) and winning the associated block rewards.
Later, the attacker publishes block $A_1$ to match an honest block $H_1$ (and a fork is created).
As the two branches have the same length, the fork cannot be resolved until the next block is mined.
By contrast, in NC, the attacker may publish two withheld blocks and has a higher chance of winning the fork race.
This clearly explains why it is more difficult for selfish mining to be profitable in Crystal.

\begin{figure}[t]
    \centering
    \includegraphics[width=2.5in]{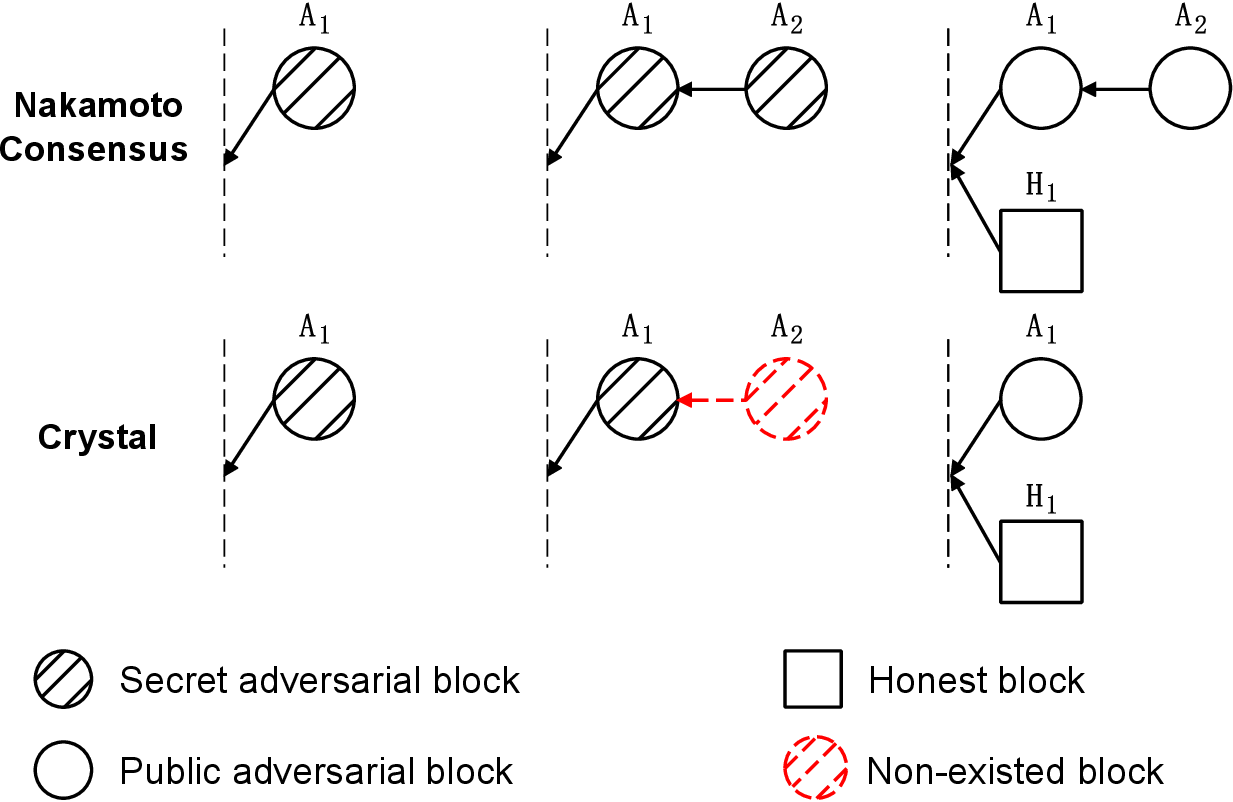}
    \caption{\textbf{A simple comparison between selfish mining in NC and Crystal.} The red dashed circle block represents the lost opportunity of generating a new block in Crystal.}
    \vspace{-2mm}
    \label{fig: selfish}
\end{figure}

We now turn to formal analysis. For simplicity, we assume that the delay $\Delta$ is close to zero. Thus, no blocks are mined when other blocks are transmitted.
This assumption is widely used in the analysis of selfish mining for Bitcoin~\cite{eyal2014majority, sapirshtein2016optimal,nayak2016stubborn}~\footnote{The assumption can be justified because the block generation interval in Bitcoin (i.e., $10$ minutes on average) is significantly larger than the block propagation delay (e.g., smaller than $10$ seconds \cite{Decker2013}). Crystal adopts a similar setting.}.
We will later relax this assumption and consider the impact of the network delay in our evaluations.
We present the Markov model of selfish mining attack on Crystal in Figure~\ref{fig: mining}.
For ease of presentation, we re-scale the time axis so that the attacker generates new blocks at rate $\alpha$, and honest participants (as a whole) generate new blocks at rate $\beta$.
Besides, we introduce the same parameter $\gamma$ as in~\cite{eyal2014majority, sapirshtein2016optimal,nayak2016stubborn}, which denotes the fraction of honest participants that are mining on blocks produced by the attacker (rather than by the honest participants) whenever they observe two forking branches of equal length. The value of $\gamma$ depends on the fork choice rule. In Crystal, $\gamma = 1/2$ because Crystal adopts the uniform tie-breaking rule (see Section~\ref{sec: protocol}).
We are now ready to introduce the states for our Markov model:
\begin{itemize} [leftmargin=*]
    \item $S_0$: honest participants and the attacker have the same view of the blockchain;

    \item $S_1$: the attacker's chain is one block longer than the public chain;

    \item $S_{0^{\prime}}$: the attacker's chain has an equal length with the honest participants', and honest participants randomly choose one of the branches to mine the next block.
\end{itemize}

By solving the above Markov model, we have the following theorem of selfish mining in Crystal.

\begin{theorem} \label{theorem:selfish}
    For $\alpha \in [0, 0.5)$, the attacker in Crystal cannot gain a higher fraction of blocks included in the longest chain than $\alpha$ in the long term.
\end{theorem}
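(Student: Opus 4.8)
The plan is to reduce every selfish‑mining strategy available to the attacker to the three‑state Markov chain $\{S_0, S_1, S_{0'}\}$ introduced above, solve it, and bound the attacker's long‑run share of main‑chain blocks. The crucial structural fact, which I would establish first, is that in Crystal the attacker's private chain can never lead the public chain by more than one block: to mine a child of a privately held block $A_1$ the attacker needs a valid QC for $A_1$, and by Lemma~\ref{lem: majority} — when the good‑committee property holds, i.e. all but probability $\epsilon$ — it cannot assemble such a QC without first revealing $A_1$ to an honest committee member, who then propagates it to the whole network. Consequently the only nontrivial deviation is to withhold a freshly mined block and release it later to match an honest block; adopting honest blocks in $S_0$, publishing $A_1$ the moment an honest block $H_1$ appears (state $S_1$), and extending $A_1$ in the ensuing length‑one fork (state $S_{0'}$) are all either forced or strictly dominated. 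I would also remark that a colluding set of rational participants with aggregate power $\alpha < 1/2$ is subsumed by treating it as the single attacker, and that the rare ($\le\epsilon$) good‑committee failures contribute only a negligible amount to the long‑run fraction, consistent with the ``in the long term'' qualifier in the statement.

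Next I would write down the transition probabilities on the rescaled time axis (attacker block rate $\alpha$, honest block rate $\beta$, and $\gamma = 1/2$ from the uniform tie‑break): from $S_0$, move to $S_1$ with probability $\alpha$ and stay in $S_0$ with probability $\beta$; from $S_1$, self‑loop with probability $\alpha$ (an attacker block in this state is merely a useless sibling of $A_1$, since $A_1$ cannot yet be extended) and move to $S_{0'}$ with probability $\beta$; from $S_{0'}$, return to $S_0$ with probability $1$, split into three sub‑events: the attacker extends $A_1$ (probability $\alpha$), an honest miner extends $A_1$ (probability $\gamma\beta$), or an honest miner extends $H_1$ (probability $(1-\gamma)\beta$). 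Solving the balance equations $\pi_0 = \beta\pi_0 + \pi_{0'}$, $\pi_1 = \alpha\pi_0 + \alpha\pi_1$, $\pi_{0'} = \beta\pi_1$ together with normalization yields $\pi_0 = \beta/(1+\alpha\beta)$, $\pi_1 = \alpha/(1+\alpha\beta)$, $\pi_{0'} = \alpha\beta/(1+\alpha\beta)$.

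Then I would do the reward bookkeeping at the level of transitions, crediting a block only when it becomes permanent in the main chain: an $S_0\to S_0$ step contributes one honest block; each $S_{0'}\to S_0$ resolution adds exactly two blocks to the main chain (the winning length‑one branch grows to length two), of which the attacker receives $2$, $1$, or $0$ in the three sub‑events, with the deferred blocks $A_1$ (mined at $S_0\to S_1$) and $H_1$ (mined at $S_1\to S_{0'}$) counted at resolution and orphaned blocks counted for no one. Hence the attacker's long‑run share is
$\,R_A/(R_A+R_H) = \pi_{0'}(2\alpha+\gamma\beta)\big/\big(\pi_0\beta + 2\pi_{0'}\big)$, which for $\gamma = 1/2$ simplifies, using $\alpha+\beta = 1$, to $\alpha\,(2\alpha+\beta/2)/(1+\alpha)$. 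Finally this equals $\alpha$ times $(2\alpha+\beta/2)/(1+\alpha)$, and $(2\alpha+\beta/2)/(1+\alpha)\le 1 \iff \alpha + \beta/2 \le 1 = \alpha+\beta \iff \beta/2 \le \beta$, which holds; so the share is at most $\alpha$ (in fact strictly below it since $\beta > 1/2$), proving Theorem~\ref{theorem:selfish}.

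The main obstacle is the first step: rigorously arguing that the three‑state chain exhausts the attacker's options. One must rule out subtler behaviors — exploiting the $\epsilon$‑probability windows in which the good‑committee property fails, withholding the tie‑breaking block in $S_{0'}$ to chain together a sequence of equal‑length forks, releasing $A_1$ strategically to bias the split $\gamma$, or mining on an honest branch — and show that each is either impossible under Lemma~\ref{lem: majority} or strictly worse than the modeled strategy; the careful attribution of deferred and orphaned blocks in the reward tally at fork‑resolution points is the other place where a slip would propagate into the final bound.
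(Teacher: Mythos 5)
Your proposal is correct and follows essentially the same route as the paper's own proof: the identical three-state Markov chain with the same stationary distribution $\pi_0=\beta/(1+\alpha\beta)$, $\pi_1=\alpha/(1+\alpha\beta)$, $\pi_{0'}=\alpha\beta/(1+\alpha\beta)$, the same per-transition block accounting (one honest block on $S_0\to S_0$; two blocks split $2/1/0$ at each fork resolution), and the same closed form $\alpha(2\alpha+\gamma\beta)/(2\alpha+\beta)\le\alpha$. Your added care in justifying that the chain exhausts the attacker's strategies and in flagging the $\epsilon$-probability committee failures only makes explicit what the paper treats informally in the main text and in its remark.
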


The theorem shows that the attacker cannot obtain a higher fraction of blocks through selfish mining. As we said previously, there are two main reasons. First, the attacker has to stop mining for launching the selfish mining attack, resulting in a loss of mining new blocks. (See the red dashed line in Figure~\ref{fig: mining}.)
Second, the decreased block leads make the attacker hard to win the fork competition. Due to space limits, detailed proof of this theorem is provided in Appendix~\ref{appen: selfish}.
With this theorem, we can easily derive the fraction of block rewards won by the attacker and have the following corollary.

\begin{figure}[t]
    \centering
    \includegraphics[width=2in]{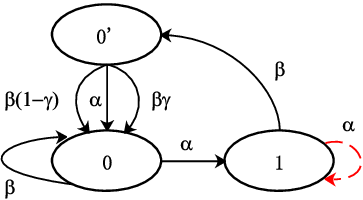}
    \caption{\textbf{The Markov process of the selfish mining in Crystal.} The red dashed line represents the nonexistent transition that the attacker mines the next block on its withheld block, and is drawn to simplify the analysis.}
    \vspace{-2mm}
    \label{fig: mining}
\end{figure}

\begin{corollary}
    For $\alpha \in [0, 0.5)$, the attacker in Crystal cannot gain a higher fraction of block rewards from the selfish mining attack than honest mining in the long term.
\end{corollary}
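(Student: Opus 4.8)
The plan is to derive the corollary directly from Theorem~\ref{theorem:selfish}, using the fact that in the $\Delta\to 0$ regime the quantity that governs long-run profitability (after difficulty re-adjustment) is the attacker's \emph{relative} revenue, i.e.\ the ratio of the reward it collects to the total reward distributed, and that for the block-reward stream this ratio equals the fraction of longest-chain blocks mined by the attacker.

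First I would fix the honest-mining baseline. When every rational participant follows the protocol, the assumption $\Delta\approx 0$ implies that no forks arise, so each mined block is eventually included in the longest chain; by the Poisson mining model the attacker coalition mines a fraction $\alpha$ of all blocks and therefore receives a fraction $\alpha$ of the block rewards. (The same holds for the voter and inclusion rewards $R_v,R_i$, since the sliding window is then filled in proportion to mining power.) Hence honest mining yields relative revenue exactly $\alpha$.

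Next, for the selfish-mining strategy, Theorem~\ref{theorem:selfish} states that for $\alpha\in[0,1/2)$ the long-run fraction of blocks in the longest chain that the attacker mined is at most $\alpha$. Because every block that survives in the longest chain carries an identical block reward, the attacker's fraction of the total block reward equals this block fraction, and so is at most $\alpha$ --- no larger than under honest mining. The Crystal-specific rewards are handled the same way: committee membership of any block is sampled by VRF sortition from the owners of the last $W$ longest-chain blocks, and by Theorem~\ref{theorem:selfish} the attacker owns at most an $\alpha$-fraction of those slots in the long run, so its expected share of voter rewards and of inclusion rewards is also at most $\alpha$. Summing the three reward streams and invoking the difficulty-adjustment argument (which normalizes the absolute issuance rate over each epoch) gives the claim.

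I expect the only delicate point to be this last step: one must argue that no mining deviation can inflate the attacker's committee participation beyond its longest-chain block share, and that transaction fees do not introduce an exploitable asymmetry --- the standard way around the latter is to assume roughly uniform per-block fees (or to absorb fees into the block reward), after which the comparison is immediate. Everything else is a direct reading of Theorem~\ref{theorem:selfish}.
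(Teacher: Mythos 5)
Your proposal is correct and follows essentially the same route as the paper: the paper's proof is the one-line observation that each longest-chain block carries a fixed block reward, so the reward fraction equals the block fraction bounded by Theorem~\ref{theorem:selfish}, with transaction fees dismissed in the surrounding text as negligible relative to block rewards. Your additional care about the voter/inclusion rewards and the honest-mining baseline is a more thorough rendering of the same argument, not a different one.
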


\begin{proof}
    Each block eventually ended in the longest chain can bring its participant a fixed block reward, and so the fraction of blocks in the main chain equals the fraction of block rewards won by participants (see Section~\ref{subsec: reward}).
\end{proof}
The corollary shows that the attacker cannot obtain a higher fraction of block rewards from the selfish mining attack. In the NC-style blockchains (e.g., Bitcoin and Ethereum), block rewards are currently significantly higher than the transaction fees to incentivize participants' participation. This also holds in Crystal. Hence, the transaction fees are usually ignored in the analysis of the selfish mining~\cite{eyal2014majority, sapirshtein2016optimal,nayak2016stubborn}.
Therefore, we can conclude that the attacker cannot gain a higher revenue by the selfish mining attack in Crystal.

\begin{remark}
    The above analysis ignores the failure probability of the good committee property. Indeed, the failure probability (i.e., $10^{-4}$) is too small to affect the results.
    This also holds for the analysis of the double-spending attack.
\end{remark}

\subsection{Safety} \label{subsec:safety}
In this section, we first illustrate how Crystal improves the security against the double-spending attack~\cite{nakamoto2012bitcoin, rosenfeld2014analysis, ozisik2017explanation}.
Then, we prove that Crystal can provide safety no matter what the attacker does.

\subsubsection{Double-Spending attack}\label{subsec:spending}
To better understand how Crystal thwarts the double-spending attack, we first provide a simple example, which illustrates the attack in NC and Crystal.
Recall that participants in NC and Crystal follow the $k$-deep confirmation rule, i.e.,  a block as well as its containing transactions are confirmed if this block is included in the longest chain and extended by at least $k-1$ other blocks.
Hence, the goal of the double-spending attack is to revert a transaction ``confirmed'' by some honest participants.

\begin{figure}[t]
    \centering
    \includegraphics[width=3.2in]{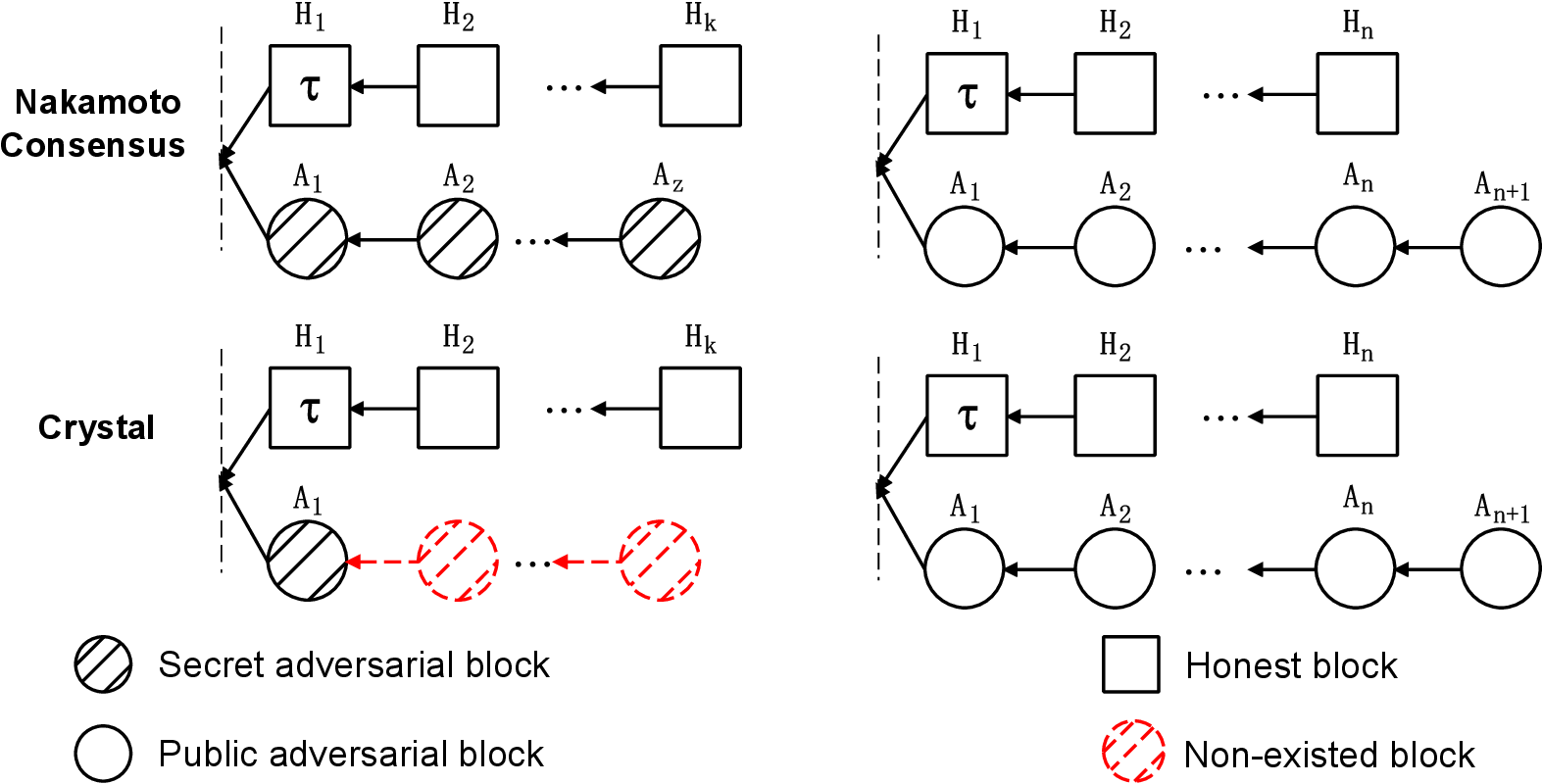}
    \caption{The $k$-confirmation rule in Crystal. When the merchant receives block $H_k$, it releases its goods. Then, the attacker tries to catch up and eventually has a longer forking branch than the public branch containing block $H_1$.
        For simplicity, we do not draw the quorum certificate links.}
    \vspace{-2mm}
    \label{fig: double}
\end{figure}

Let us first look at the double-spending attack in NC, as shown in Figure~\ref{fig: double}. The double-spending attack process can be divided into two phases~\cite{rosenfeld2014analysis, ozisik2017explanation}. First, the attacker buys some off-chain goods and pays the merchant with a transaction $\tau$.
Then, the transaction is included in a block $H_1$, which is further extended by some subsequent blocks $H_2, H_3$, . . . , known to all.
Meanwhile, the attacker secretly builds another forking branch $A_1, A_2, A_3,...$, and block $A_1$ contains a conflicting transaction $\tau^{\prime}$, which double-spends the coin containing in the transaction $\tau$.
In the second phase, the merchant releases the goods once receiving block $H_k$ by the $k$-deep confirmation rule. Then, the attacker tries to publish a longer branch than the public one.
If succeeds, all participants will accept transaction $\tau^{\prime}$ by the longest chain rule, and the merchant will neither have goods nor be paid.

In the double-spending attack, the attacker can mine as many blocks as possible, but cannot release these blocks before the $k$-confirmation is obtained.
This is because the publication of the conflicting transaction $\tau^{\prime}$ will alert the merchant, possibly resulting in transaction cancellation.
Hence, in Crystal, the attacker can only withhold at most one block and cannot mine the next block before the merchant receives block $H_k$
due to the good committee property.
The second phase is similar to that in NC, i.e., the attacker tries to publish a longer branch.
Intuitively, due to the decrease of withheld blocks in the first phase, the success probability for the double-spending attack will decrease significantly as $k$ increases.

Next, we compute the success probability for the double-spending attack. We assume that the message delay $\Delta$ approaches zero by following the prior analysis in~\cite{nakamoto2012bitcoin, rosenfeld2014analysis, ozisik2017explanation}. (See Section~\ref{subsec:incentive} for more detailed reasons.)
We will remove the limitation in our later evaluations.
Following a similar analysis of the double-spending attack that developed in~\cite{nakamoto2012bitcoin, rosenfeld2014analysis, ozisik2017explanation}, we can obtain the following theorem.
\begin{theorem}
    For $\alpha \in [0, 0.5)$, the success probability for the double-spending attack is $(\frac{\alpha}{1-\alpha})^{k-1}$.
\end{theorem}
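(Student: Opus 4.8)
The plan is to follow the classical Nakamoto--Rosenfeld double-spending analysis, but with the one crucial modification that the good committee property forces on the attacker: the private fork can contain at most one block throughout the ``pre-mining'' phase. First I would isolate this as the key structural claim. Let $B_0$ be the common parent of $H_1$ and $A_1$, where $H_1$ carries the honest payment $\tau$ and $A_1$ carries the conflicting transaction $\tau'$. To mine $A_2$ on top of $A_1$ the attacker must embed a valid QC for $A_1$, which by the good committee property (Lemma~\ref{lem: majority}, ignoring the $\epsilon$ failure as the Remark does) requires at least one signature from an honest member of $\mathcal{M}_{A_1}$. Obtaining that signature forces the attacker to reveal $A_1$---hence $\tau'$---to an honest participant, who then propagates it network-wide; the merchant therefore learns of the conflict and withholds the goods before any payoff can occur. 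Consequently, while the merchant is still waiting for confirmations, the attacker is stuck at $A_1$ and can produce no further private blocks. This contrasts with NC, where the attacker can pre-mine arbitrarily many blocks, which is why its success probability is a sum over the pre-mined count rather than a single term.

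Next I would pin down the race configuration at the instant the merchant releases the goods, i.e., when $H_1$ becomes $k$-deep. At that moment the honest chain past $B_0$ consists of $H_1,\dots,H_k$ ($k$ blocks), and, in the case most favourable to the attacker, its private chain consists of exactly $A_1$ (one block), so the attacker is $k-1$ blocks behind; having zero withheld blocks would only worsen the attacker's deficit to $k$, so $(\frac{\alpha}{1-\alpha})^{k-1}$ is both the best case and an upper bound. From here on the attacker is forced to mine publicly---it must reveal $A_1$ to obtain its QC and then extend normally---so in the zero-latency limit ($\Delta \to 0$) every subsequent block is mined by the attacker with probability $\alpha$ and by the honest network with probability $\beta = 1-\alpha$, independently. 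Following the convention already imported from~\cite{nakamoto2012bitcoin, rosenfeld2014analysis, ozisik2017explanation}, the attack succeeds iff the attacker's chain ever catches up to the length of the honest chain.

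Then the computation collapses to a one-line gambler's-ruin argument. Let $L_t$ be the honest chain's lead over the attacker's chain; it is a biased random walk started at $L_0 = k-1$ with increments $+1$ with probability $\beta$ and $-1$ with probability $\alpha < \beta$, hence with positive drift. The probability that such a walk ever reaches $0$ from a start of $n \ge 1$ is $(\alpha/\beta)^n$---the unique bounded solution of $h(n) = \alpha\, h(n-1) + \beta\, h(n+1)$ with $h(0)=1$ and $h(n) \to 0$, which one verifies directly using $\alpha+\beta=1$. Taking $n = k-1$ gives the claimed success probability $\bigl(\tfrac{\alpha}{1-\alpha}\bigr)^{k-1}$.

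I expect the main obstacle to be the first step: arguing rigorously that no alternative attacker strategy circumvents the ``at most one withheld block'' bound---e.g., revealing $A_1$ early and racing in the open, re-mining a different conflicting block at the fork height, or exploiting the $\epsilon$-probability bad committees---and being precise about exactly what the merchant observes and when it aborts. Once the deficit $k-1$ is justified, the gambler's-ruin step is routine, and the remaining modeling assumptions ($\Delta \to 0$, honest majority $\beta > 1/2$, and the ``catch-up $=$ success'' convention) are exactly those already used in~\cite{nakamoto2012bitcoin, rosenfeld2014analysis, ozisik2017explanation}.
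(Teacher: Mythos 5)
Your proposal is correct and follows essentially the same route as the paper: the good committee property caps the attacker's pre-mined private chain at one block, so the race starts with a deficit of $k-1$, and the standard Nakamoto--Rosenfeld catch-up probability gives $\left(\frac{\alpha}{1-\alpha}\right)^{k-1}$. The paper's own proof is just a citation to Rosenfeld for that last step, so your explicit gambler's-ruin derivation (and your care about the ``catch up to a tie'' convention, which is what actually yields the exponent $k-1$ rather than $k$) only adds detail the paper omits.
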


\begin{proof}
    We assume that the attacker can pre-mine one block before the attack by following the analysis in ~\cite{rosenfeld2014analysis,ozisik2017explanation}.
    Therefore, at the beginning of the second phase, the attacker is $k - 1$ blocks behind the public chain.
    The probability the attacker will ever catch up, and eventually have at least one block longer than the public branch is $(\frac{\alpha}{\beta})^{k-1} = (\frac{\alpha}{1-\alpha})^{k-1}$ by the analysis in~\cite{rosenfeld2014analysis}.
\end{proof}
By this theorem, we can compute the suitable $k$ given the attacker's computation power $\alpha$ and the allowed successful probability of the double-spending attack.
Recall that $\alpha$ is less than $0.5$, so $\frac{\alpha}{1-\alpha} < 1$, and the successful probability is exponentially decreasing as $k$ increases.
For better flow, we defer the results and the comparisons with the results of the double-spending attack in NC to Section~\ref{sec: evaluation}.

\subsubsection{Safety Analysis}
In this section, we prove that Crystal can maintain safety against \emph{all} possible attacks. The aforementioned double-spending attack is only one specific safety attack.
To better understand the detailed analysis, we first provide a proof sketch.

\vspace{1mm} \noindent \textbf{Proof Sketch.} Our safety analysis of Crystal follows the existing methods of NC~\cite{garay2015bitcoin, Pass2017,Kiffer2018,Jing2020,Jing2021}.
\jn{In particular, safety analysis of NC relies on a special kind of block called \emph{converged block} (aka convergence opportunity~\cite{Pass2017,Kiffer2018} or double-lagger~\cite{Jing2021}).} 
A converged block is an honest block that is mined at time $t$, and in the previous and next $\Delta$ period there are no other honest blocks that are mined.
The previous $\Delta$ guarantees that the block is mined after one of the longest chains.
The next $\Delta$ guarantees that all honest participants will receive the block and mine after it by the longest chain rule.
Furthermore, if there are no adversarial blocks to match the block, all honest participants will accept the chain ending with this block, reaching the safety of the same chain.
Informally speaking, the safety analysis of NC is to show that during a period, with high probability, there always exists such converged blocks no matter what the attacker does.

Similarly, In Crystal, if a block $D$ is mined at time $t$, and in the previous and next $2\Delta$ period there are no other blocks are mined, the block will be the uniquely certified block at its height.
Similarly, the $2\Delta$ period can guarantee that the block is the only certified honest block at its height.
If there is no adversarial block to match the block, the chain that ends with this block will be accepted by all honest participants.
Therefore, the analysis goal is also to show that during a period, with high probability, there always exist such \emph{converged} blocks no matter what the attacker does.
As the only difference between the safety analysis of NC and Crystal lies in the conditions for converged blocks (i.e., $\Delta$ vs. $2 \Delta$), the existing safety proofs of NC~\cite{garay2015bitcoin, Pass2017, Kiffer2018, Jing2020, Jing2021} can be easily extended to Crystal.

\vspace{1mm} \noindent \textbf{Analysis.} We first define converged blocks and then prove that Crystal has the safety property. We provide several important results here, while deferring the detailed proofs to Appendix~\ref{appen: safety proof}.

\begin{definition}[Converged blocks]
    A block produced by honest miners at time $t$ is called a converged block if no honest block is produced in the previous and next $2 \Delta$ time.
\end{definition}

\begin{lemma}\label{lem:delta}
    If an honest block $B_{\ell}$ is mined at time $t_0$, then every honest participant can start to mine a block of at least height $\ell+1$ by the time $t_0 + 2\Delta$.
\end{lemma}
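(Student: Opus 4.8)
The plan is to trace the consequences of the synchronous network assumption ($\Delta$-bounded delivery) together with the certification rule. First I would observe that when the honest block $B_\ell$ is mined at time $t_0$, its proposer must already possess a valid QC for $B_\ell$'s parent (otherwise $B_\ell$ would be invalid), so $B_\ell$ is a well-formed block of height $\ell$ sitting on a chain of certified blocks. By time $t_0 + \Delta$ every honest participant has received $B_\ell$ and all of its ancestors (since the honest proposer broadcasts $B_\ell$, and by an inductive argument on height every ancestor was either mined or relayed by an honest participant at least $\Delta$ earlier, using the standard gossip/relay assumption). Hence by $t_0+\Delta$ every honest participant holds $B_\ell$ in its local blocktree and regards height-$\ell$ as attainable.

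The second step is to get from "$B_\ell$ is in the blocktree" to "an honest participant can mine at height $\ell+1$," which additionally requires a valid QC \emph{on $B_\ell$ itself}. Here I would invoke the committee-formation mechanism: upon receiving $B_\ell$, each honest committee member of $\mathcal{M}_{B_\ell}$ runs \textsf{voteForBlock} and broadcasts its $eCert$; these votes are delivered within another $\Delta$, i.e.\ by time $t_0 + 2\Delta$. By the good committee property (Lemma~\ref{lem: majority}), the honest membership shares in $\mathcal{M}_{B_\ell}$ already exceed $m/2$, so the honest votes alone form a valid QC on $B_\ell$. Therefore by $t_0 + 2\Delta$ every honest participant either has assembled, or has received enough $eCert$s to assemble, a valid QC for $B_\ell$, and can include it in the header of a new block extending $B_\ell$. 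Since $B_\ell$ has height $\ell$ and is now certified, any honest participant mining on the tip of its longest (certified) chain mines a block of height at least $\ell+1$.

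The main obstacle is the second step: one must argue carefully that the honest committee members actually \emph{act} within the $\Delta$ window and that their $eCert$s are \emph{sufficient} on their own. The sufficiency is exactly what the good committee property buys us, so the argument reduces to citing Lemma~\ref{lem: majority}; strictly speaking this makes the lemma hold "except with probability $\epsilon$," and I would state it that way (or note that the analysis conditions on the good committee event, consistent with the Remark following Theorem~\ref{theorem:selfish}). A minor subtlety worth a sentence: an honest participant might already be mining on a \emph{longer} certified chain than the one ending at $B_\ell$, but that only strengthens the conclusion, since such a participant is already mining at height $> \ell+1$; and if $B_\ell$ extends the longest chain that the participant knows, the uniform tie-breaking rule (Section~\ref{sec: protocol}) ensures it will mine on $B_\ell$ or an equally long certified branch, again at height $\geq \ell+1$. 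Collecting these observations gives the bound $t_0 + 2\Delta$.
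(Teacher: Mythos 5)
Your proposal is correct and follows essentially the same route as the paper's proof: delivery of $B_\ell$ and its ancestors within $\Delta$, honest committee members voting and their votes arriving within another $\Delta$, and the honest majority of membership shares sufficing for a QC so that every honest participant can mine at height at least $\ell+1$ by $t_0+2\Delta$. Your explicit appeal to the good committee property (and the resulting except-with-probability-$\epsilon$ qualifier) is left implicit in the paper's version, so your write-up is if anything slightly more careful on that point.
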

This lemma shows that the maximum delay for honest participants to mine on top of a new honest block in Crystal is at most $2\Delta$.
In reality, members in committee $\mathcal{M}_{B_{\ell}}$ can sign block $B_{\ell}$ once they receive $B_{\ell}$.
Due to space limits, the formal proof is provided in Appendix~\ref{appen: safety proof}.

Next, we will prove the safety property, i.e., honest participants do not confirm different blocks at the same height of blockchains.

\begin{lemma}
    Suppose that block $B$ is a converged block of height $\ell$, then $B$ is the only honest block of height $\ell$.
\end{lemma}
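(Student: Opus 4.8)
The plan is to argue by contradiction using the defining timing property of a converged block together with Lemma~\ref{lem:delta}. Suppose $B$ is a converged block of height $\ell$ mined at time $t_0$ by an honest miner, and suppose for contradiction that there is another honest block $B'$ of height $\ell$. I would first dispose of the case where $B'$ is mined \emph{before} $B$: if $B'$ were an honest block of height $\ell$ mined at some time $t_1 < t_0$, then by Lemma~\ref{lem:delta} every honest participant would be mining on a block of height at least $\ell+1$ by time $t_1 + 2\Delta$; but since $B$ is converged, no honest block is produced in the interval $(t_0 - 2\Delta, t_0)$, which forces $t_1 \le t_0 - 2\Delta$, hence $t_1 + 2\Delta \le t_0$, so the honest miner of $B$ would already be extending a chain of height $\ge \ell+1$ at time $t_0$ and could not mine a block of height exactly $\ell$ — contradiction. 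The symmetric case where $B'$ is mined \emph{after} $B$ is handled the same way with the roles of the next-$2\Delta$ window of $B$ playing the role: by Lemma~\ref{lem:delta} applied to $B$, every honest participant mines at height $\ge \ell+1$ by $t_0 + 2\Delta$, and the converged property of $B$ says no honest block appears in $(t_0, t_0 + 2\Delta)$, so $B'$ would have to be mined at time $t_1 \ge t_0 + 2\Delta$, at which point its honest miner is already at height $\ge \ell+1$ and cannot produce height $\ell$ — contradiction. The remaining degenerate case $t_1 = t_0$ cannot occur because $B$ being converged rules out any other honest block in the $2\Delta$-window around $t_0$, which includes $t_0$ itself.

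The key steps, in order, are: (i) set up the contradiction hypothesis that two distinct honest blocks share height $\ell$; (ii) invoke Lemma~\ref{lem:delta} to convert "an honest block of height $\ell$ exists at time $\tau$" into "all honest miners are at height $\ge \ell+1$ by $\tau + 2\Delta$"; (iii) use the converged-block definition to pin down how far in time the competing honest block must lie from $t_0$ (at least $2\Delta$ on either side); (iv) combine (ii) and (iii) to show the miner of whichever block comes later is already working above height $\ell$ when it allegedly mines at height $\ell$, the contradiction. A subtle point worth stating explicitly is why an honest miner at height $\ge \ell+1$ "cannot mine a block of height exactly $\ell$": this follows from the longest-chain rule — an honest participant always extends (one of) the longest chain(s) it has seen, so once its view contains a chain of height $\ell+1$ it will only ever produce blocks of height $\ge \ell+2$.

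I expect the main obstacle to be bookkeeping the timing inequalities cleanly, in particular making sure the two sub-cases (competitor earlier vs.\ later than $B$) genuinely cover all possibilities and that the boundary case of simultaneous mining is correctly excluded by the strictness in the converged-block definition. A secondary subtlety is that Lemma~\ref{lem:delta} as stated talks about honest participants being \emph{able} to start mining at height $\ell+1$ by $t_0 + 2\Delta$; I should make sure that this "able to" is exactly what the longest-chain argument needs — namely that by that time their local blocktree contains a certified chain of height $\ell$, so the block $B_\ell$ (or a competitor of the same height) is available with a valid QC and all subsequent honest mining is strictly above height $\ell$. Once these timing and longest-chain details are in place, the contradiction is immediate, and the lemma follows.
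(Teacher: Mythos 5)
Your proposal is correct and follows essentially the same argument as the paper: assume a second honest block $B'$ at height $\ell$ for contradiction, use the converged-block definition to force $|t'-t|\ge 2\Delta$, and then apply Lemma~\ref{lem:delta} in each of the two timing cases to conclude the later block's honest miner already extends a chain of height at least $\ell$ and hence cannot mine at height $\ell$. Your explicit handling of the simultaneous case and the longest-chain justification are fine elaborations of points the paper leaves implicit.
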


This lemma says that every converged block has a unique height among all other honest blocks. The detailed proof is provided in Appendix~\ref{appen: safety proof}.
Next, we give the bounds on the number of converged blocks in a time interval of $t$.

\begin{lemma} \label{lemma:converged}
    Let $\eta = e^{-2 \beta f \Delta }$. For any $0 < \delta <1$, the number of converged blocks mined in a time interval $t$ is at least $(1+\delta) \eta^2  \beta f t $, except for $e^{-\Omega\left(\delta^2 \eta^2 \beta f t \right)}$ probability.
\end{lemma}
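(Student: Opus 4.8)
\textbf{Proof plan for Lemma~\ref{lemma:converged}.}
The plan is to treat honest block production as a Poisson process of rate $\beta f$ (using the re-scaled time axis and the mining model of Section~\ref{subsec: system-model}), and to count converged blocks via a ``thinning'' argument that mirrors the classical analysis of convergence opportunities in NC~\cite{Pass2017,Kiffer2018}. First I would partition the interval $[0,t]$ and reason about each honest block: an honest block mined at time $s$ is a converged block exactly when no other honest block is mined in $(s-2\Delta, s+2\Delta)$. Conditioned on an honest arrival at time $s$, the probability that the preceding honest interarrival exceeds $2\Delta$ and the following one exceeds $2\Delta$ is $e^{-2\beta f \Delta}\cdot e^{-2\beta f \Delta} = \eta^2$, by independence of the exponential gaps. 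Hence in expectation the number of converged blocks in $[0,t]$ is (up to boundary effects that vanish relative to $t$) at least $\eta^2 \beta f t$.

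The second step is concentration. The number of honest blocks $N(t)$ in $[0,t]$ is Poisson with mean $\beta f t$, which concentrates around its mean by a standard Chernoff bound for Poisson variables; so with probability $1 - e^{-\Omega(\beta f t)}$ we have $N(t) \geq (1-\delta')\beta f t$ for a suitable $\delta'$. Then, conditioned on the arrival times, I would expose the ``converged'' indicator for a maximal set of honest blocks that are pairwise more than $4\Delta$ apart, so that their indicators are mutually independent (the event for such a block depends only on the gaps to its immediate neighbours, which are disjoint across a $4\Delta$-separated subfamily). Each such indicator is $1$ with probability at least $\eta^2$, so a Chernoff bound on this independent sum gives that the count of converged blocks is at least $(1+\delta)\eta^2 \beta f t$ failing only with probability $e^{-\Omega(\delta^2 \eta^2 \beta f t)}$. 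Choosing the separated subfamily to have size $\Theta(\beta f t)$ (it loses only a constant factor since arrivals are Poisson) preserves the stated exponent; one folds the two failure events together with a union bound, both being $e^{-\Omega(\delta^2 \eta^2 \beta f t)}$ for $\delta<1$.

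The main obstacle is the dependency structure: whether an honest block is ``converged'' depends on its neighbours, so the naive sum of indicators over all honest blocks is not a sum of independent variables, and adjacent blocks are in fact negatively correlated in a way that is easy to get wrong. The clean fix is the $4\Delta$-separation trick above, which buys genuine independence at the cost of a constant factor in the count — this is exactly the place where one must be careful that the constant does not degrade the $\delta^2$ in the exponent beyond what the statement allows. A secondary, minor technical point is handling the two boundary windows of width $2\Delta$ at the ends of $[0,t]$; since $\Delta$ is a fixed constant and $t$ is taken large, their contribution is $O(1)$ and absorbed into the $\delta$ slack. I would remark that, as noted in the proof sketch, this is precisely the NC converged-block argument with $\Delta$ replaced by $2\Delta$, so the details can be imported from~\cite{Pass2017,Kiffer2018,Jing2021} essentially verbatim.
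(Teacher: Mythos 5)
Your overall strategy matches the paper's: condition on the honest arrival process (Poisson at rate $\beta f$), note that each honest block is converged exactly when its two adjacent interarrival gaps both exceed $2\Delta$ (probability $e^{-2\beta f\Delta}\cdot e^{-2\beta f\Delta}=\eta^2$ by independence of the exponential gaps), lower-bound the number of honest blocks by Chernoff, and then handle the dependence among the converged-indicators. The one place you diverge is the dependence-handling device, and there you have a genuine gap: your $4\Delta$-time-separation trick restricts attention to a subfamily of only $c\,\beta f t$ blocks for some constant $c<1$, so the independent sum you apply Chernoff to has mean $c\,\eta^2\beta f t$, and the best lower bound it can yield is $(1-\delta)\,c\,\eta^2\beta f t$ — not the stated $\eta^2\beta f t$ constant. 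You assert the loss "preserves the stated exponent," which is true, but the lemma pins down the leading constant, not just the exponent, and that constant is load-bearing downstream: Lemma~\ref{lemma:converged2} and Theorem~\ref{thm:safety} compare the converged-block count against the adversary's block count under the hypothesis $\eta^2\beta>(1+\delta)\alpha$; with your version the hypothesis would have to be strengthened to $c\,\eta^2\beta>(1+\delta)\alpha$, weakening the safety guarantee.

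The paper's fix is lossless and simpler: index the $n$ honest blocks, observe that $Y_i$ depends only on the gaps $X_i,X_{i+1}$, so $Y_i$ and $Y_{i+2}$ are independent (they involve disjoint gap sets) even though $Y_i$ and $Y_{i+1}$ are not. Writing $Y=\sum_{\text{odd}}Y_i+\sum_{\text{even}}Y_i$ expresses the \emph{full} count as a sum of two sums of independent indicators, and the Chernoff bound for such $T$-dependent sums (Lemma~\ref{lem:key_step} with $T=2$) gives $\Pr[Y\le(1-\delta)\eta^2\beta f t]\le e^{-\Omega(\delta^2\eta^2\beta f t)}$ with no constant lost. Replace your continuous-time $4\Delta$-separation with this index-parity split and your argument goes through. (Separately, note that the lemma as printed says "at least $(1+\delta)\eta^2\beta f t$," which must be a typo for $(1-\delta)$ — a lower bound above the mean cannot hold with the claimed probability, and the paper's own proof concludes with the $(1-\delta)$ form.)
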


This lemma says that in a period of $t$, the number of converged blocks is at least $(1+\delta) \eta^2  \beta f t $. The detailed proof is provided in Appendix~\ref{appen: safety proof}.
If at one height there is only one converged block, then all miners will reach a consensus on the same chain of blocks. Therefore, to ruin the safety, the goal of the attacker is to match every converged block with a malicious block. Finally, we can give the safety theorem as follows.

\begin{theorem}[Safety]\label{thm:safety}
    Suppose $\eta^2 \beta > (1+\delta) \alpha$.
    If $B$ and $B'$ are two distinct blocks of the same height, then they cannot be
    both confirmed, each by an honest participant. This property holds, regardless of malicious action, except for $e^{-\Omega\left(\delta^2 \min\{ \eta^2 \beta, \alpha \} k \right)}$ probability.
\end{theorem}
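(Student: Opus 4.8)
The plan is to prove safety by contradiction, adapting the convergence-opportunity technique of Pass et al.~\cite{Pass2017} (see also Kiffer et al.~\cite{Kiffer2018}) to Crystal's $2\Delta$ notion of a converged block. Suppose, toward a contradiction, that two honest participants confirm distinct blocks $B$ and $B'$ of the same height $\ell$. Let $B_0$, of height $\ell_0 < \ell$, be the last common ancestor of the two chains $\mathcal{C} \ni B$ and $\mathcal{C}' \ni B'$ on which these confirmations occur; let $t_0$ be (essentially) the time the fork begins and $t_1$ the later of the two confirmation times, and set $I = [t_0, t_1]$. Because $B$ and $B'$ are confirmed under the $k$-deep rule, both $\mathcal{C}$ and $\mathcal{C}'$ have length at least $\ell + k - 1$, so for every height $h$ with $\ell_0 < h \le \ell$ each chain holds a block at height $h$, and since the chains diverge at $\ell_0$ these two blocks are distinct.

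First I would establish the matching claim: every converged block mined during $I$ charges a distinct adversarial block mined during $I$. Indeed, by the choice of $I$ together with Lemma~\ref{lem:delta} (honest parties make height progress within $2\Delta$ of any new honest block), any converged block $D$ mined in $I$ sits at a height $h$ in the contested range $\ell_0 < h \le \ell$; by the lemma that a converged block is the unique honest block at its height, at least one of the two height-$h$ blocks on $\mathcal{C}$ and $\mathcal{C}'$ must then be adversarial, and since it lies on a chain confirmed by time $t_1$ it was mined within $I$. Distinct converged blocks occupy distinct heights, so the adversarial blocks they charge are distinct. Hence the number of converged blocks mined in $I$ is at most the number of adversarial blocks mined in $I$.

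Next I would close the argument with two concentration bounds. By Lemma~\ref{lemma:converged}, the number of converged blocks mined in $I$ is at least $(1+\delta)\eta^2\beta f\,|I|$ except with probability $e^{-\Omega(\delta^2\eta^2\beta f |I|)}$; by a standard Chernoff / Poisson-tail bound on the adversarial mining process (rate $\alpha f$, as in Lemma~\ref{lemma:converged}), the number of adversarial blocks mined in $I$ is at most $(1+\delta')\alpha f\,|I|$ except with probability $e^{-\Omega(\delta'^2\alpha f |I|)}$ for a suitably small $\delta'$. Under the hypothesis $\eta^2\beta > (1+\delta)\alpha$ the slack constants can be chosen so that $(1+\delta')\alpha f |I| < (1+\delta)\eta^2\beta f |I|$, contradicting the matching claim; hence no such fork exists. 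Finally, since each confirmation is $k$-deep the honest chain must have grown by at least $k-1$ blocks during $I$, so $f|I| = \Omega(k)$ via the chain-growth estimate underlying Lemma~\ref{lemma:converged}; substituting yields the stated failure probability $e^{-\Omega(\delta^2 \min\{\eta^2\beta,\alpha\}k)}$, with a union bound over the polynomially many relevant fork heights absorbed into the hidden constant.

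I expect the main obstacle to be the matching claim, and in particular pinning down the interval $I$ so that every converged block in it genuinely lands in the contested range $(\ell_0,\ell]$ and genuinely corresponds to an adversarial block mined \emph{inside} $I$ (ruling out adversarial pre-mining, before $t_0$, of blocks above height $\ell_0$). This is exactly where Crystal differs from vanilla NC: the converged-block window is $2\Delta$ rather than $\Delta$, because a block becomes the unique block at its height only after its QC has formed, which requires a round trip through the honest committee members (guaranteed to exist with overwhelming probability by the good-committee property of Lemma~\ref{lem: majority}). I would handle this as in the NC proofs by taking $t_0$ slightly before the first post-$B_0$ honest mining event and invoking Lemma~\ref{lem:delta} inductively; the remaining $\delta$-bookkeeping and the tail bounds are routine. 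The full details are deferred to Appendix~\ref{appen: safety proof}.
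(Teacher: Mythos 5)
Your proposal follows essentially the same route as the paper's own proof: reduce the confirmation of two conflicting $k$-deep blocks to the event that every converged block in the fork interval is matched by a distinct adversarial block mined in that interval, then beat that event with the Chernoff/Poisson bounds of Lemmas~\ref{lemma:converged} and \ref{lem:poisson} under $\eta^2\beta>(1+\delta)\alpha$, and finally lower-bound the interval length by $\Omega(k/f)$ using the $2k$ confirming blocks. The only imprecision is your claim that every converged block mined in $I$ lies at a height in $(\ell_0,\ell]$ --- it may lie above $\ell$ (e.g., among the confirming blocks), a case the paper handles by showing such an unmatched converged block would have to appear on both supposedly disjoint subchains, yielding a contradiction rather than a height-$h$ match.
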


This theorem states that if $\eta^2 \beta > (1+\delta) \alpha$, there are no conflicting blocks confirmed by honest participants at the same height no matter what the attacker does.
This shows that the introduced quorum certificate does not ruin the safety of NC.
The detailed proof is provided in Appendix~\ref{appen: safety proof}.

\begin{figure}[t]
     \centering
     \includegraphics[width=2.4in]{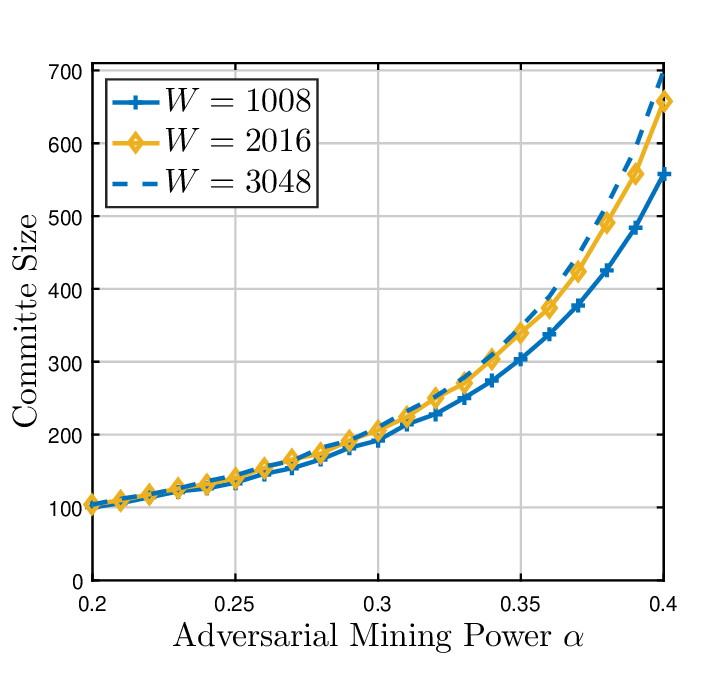}
    \caption{The sub-committee size $m$ that is sufficient to limit the failure probability of a committee to $10^{-4}$ with different malicious computation power $\alpha$ and sliding window size $W$. 
       \label{fig:paramters}
    }
\end{figure}

\section{Evaluation} \label{sec: evaluation}
In this section, we first evaluate the protocol overhead in terms of storage and network costs and then evaluate Crystal's resistance to selfish mining attacks and double-spending attacks under different parameter settings.

\subsection{Parameter Choice} \label{subsec: parameter}
Crystal has three system parameters: the sliding window size $W$, the expected subcommittee size $m$, and the maximum fraction $\alpha$ of computation power controlled by the attacker.
According to Section~\ref{subsec: committee security}, the three parameters have to be chosen such that the good committee property fails with a probability smaller than $10^{-4}$, which suffices to resist block withholding behaviors.
\jn{
Since the sub-committee is elected from members in the sliding window, the sliding window size $W$ restricts the sub-committee size $m$ and is first decided. 
Figure~\ref{fig:paramters} shows the required sub-committee size $m$ with different malicious computation power $\alpha$ and sliding window size $W$. The results show that the increase in the sliding windowing size will slightly lead to an increase in $m$. 
In Crystal, the sliding window size $W$ is $3024$ to prevent an attacker from corrupting all block owners. 
In particular, $W = 3024$ means that the committee is elected from the participants who have mined blocks in the last $3$ weeks, for an average block mining rate of $10$ minutes. Note that Crystal runs as the vanilla Nakamoto Consensus for the first $W$ blocks as a bootstrapping. This is because Lemma~\ref{lem: majority} cannot hold when the sliding window size $W$ is small.
}

Next, we plot the expected subcommittee size $m$ that is needed with different $\alpha$, for a violation probability $10^{-4}$ of the good committee property in Figure~\ref{fig:paramters}.
\jn{On one hand, we would like the system can work even with a higher fraction of Byzantine computation power $\alpha$.
On the other hand, we would like $m$ to be small such that the bandwidth costs of quorum certificates are low.}
The figure clearly shows the trade-off: the stronger the assumption on the fraction of computation power held by the attacker ($\alpha$), the larger the committee size needs to be.
To balance these two targets, we choose $\alpha = 35\%$ and $m$ is larger than $340$.
Particularly, due to the impact of offline participants (See Section~\ref{subsec:offline}), we set $m$ to $500$.

\begin{figure}[t]
    \centering
    \setlength{\abovecaptionskip}{10pt}   
    \setlength{\belowcaptionskip}{10pt}   
    \includegraphics[width=0.65\linewidth]{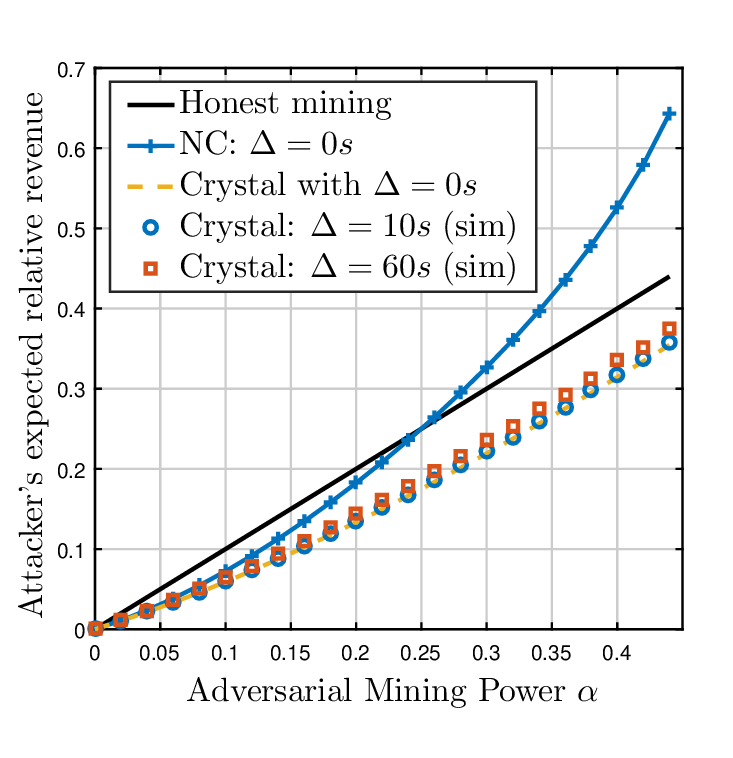}
    \caption{Relative revenue of the selfish participant in Crystal and Bitcoin.}
    \vspace{-2mm}
    \label{fig: selfishmining}
\end{figure}

\subsection{Protocol Overhead}
\jn{In this subsection, we evaluate the introduced overhead of Crystal in terms of communication and storage costs.
Specifically, each block has one additional quorum certificate $QC$, which contains a set of membership proofs (i.e., $eCert$s).
Since the membership proofs contain redundant fields (the hash of parent block and public key of signers), which can be extracted from previous blocks. 
Therefore, we only keep the VRF's output and proof in each $eCert$ with an additional bit map to denote which block is elected.
In Crystal, each VRF's output is $32$ Bytes, and the proof is $64$ Bytes\footnote{\url{https://github.com/w3f/schnorrkel/blob/master/src/vrf.rs}}.
Since the expected sub-committee size $m$ is $500$, in the worst case where each committee member has one membership share and all committee members sign for a block, the additional storage is about $48$ KB per block.
Considering the average 1MB block size in Bitcoin, a QC constitutes around $4.8\%$ of today’s blocks.
However, the top ten mining pools in Bitcoin control more than $90\%$ of computation power~\cite{bitcoinscan}.
This implies that the number of participants in the committee will be much smaller than the committee size.
Besides, the committee size for each block varies, and the probability that a block in the sliding window is selected is $p = 500/3024$.
The probability of having a committee with a size of more than $700$ is
\begin{equation}
    {\sum_{j=700}^{3024}\binom{3024}{j} p^{j}(1- p)^{3024-j}} \approx 1.33\times10^{-12},
    \label{eq:offline}
\end{equation}
which is considered low in practice.
Such a committee brings an additional overhead of about $67.2$ KB, constituting roughly $6.7\%$ of a $1$MB block.}

\begin{table*}[ht]
    \centering
    \caption{The probability of a successful double-spending attack in NC (black) and Crystal (red) under different fraction $\alpha$ of malicious computation power and number $k$ of confirmation blocks.}
    \begin{tabular}{@{}c c c c c | c c c c@{}}
        \toprule[1pt]
        \multicolumn{1}{c}{}    & \multicolumn{4}{c}{$\Delta = 0$} & \multicolumn{4}{c}{$\Delta = 10$}           \\
        \hline
        $\alpha|k$              & $2$                              & $4$                               & $6$                              & $8$                              & $2$ & $4$ & $6$ & $8$ \\
        \hline
        \multirow{2}{*}{$0.1$}  & $5.60\times10^{-2}$              & $5.46\times10^{-3}$               & $5.91\times10^{-4}$              & $6.73\times10^{-5}$
                                & $5.72\times10^{-2}$              & $5.66\times10^{-3}$               & $6.14\times10^{-4}$              & $7.18\times10^{-5}$                                      \\
                                & \color{red}{$1.23\times10^{-2}$} & \color{red}{$1.52\times10^{-4}$}  & \color{red}{$1.88\times10^{-6}$} & \color{red}{$2.32\times10^{-8}$}

                                & \color{red}{$1.28\times10^{-2}$} & \color{red}{$1.64\times10^{-4}$}  & \color{red}{$2.30\times10^{-6}$} & \color{red}{$2.40\times10^{-8}$}                         \\
        \hline
        \multirow{2}{*}{$0.2$}  & $2.08\times10^{-1}$              & $6.67\times10^{-2}$               & $2.33\times10^{-2}$              & $8.48\times10^{-3}$
                                & $2.21\times10^{-1}$              & $6.86\times10^{-2}$               & $2.44\times10^{-2}$              & $8.81\times10^{-3}$                                      \\
                                & \color{red}{$6.46\times10^{-2}$} & \color{red}{$4.01\times10^{-3}$}  & \color{red}{$2.44\times10^{-4}$} & \color{red}{$1.53\times10^{-5}$}

                                & \color{red}{$6.25\times10^{-2}$} & \color{red}{$3.91\times10^{-3}$}  & \color{red}{$2.65\times10^{-4}$} & \color{red}{$1.64\times10^{-5}$}
        \\
        \hline
        \multirow{2}{*}{$0.3$}  & $4.32\times10^{-1}$              & $2.52\times10^{-1}$               & $1.56\times10^{-1}$              & $1.00\times10^{-1}$
                                & $4.35\times10^{-1}$              & $2.58\times10^{-1}$               & $1.59\times10^{-1}$              & $1.01\times10^{-1}$                                      \\
                                & \color{red}{$1.90\times10^{-1}$} & \color{red}{$3.37\times10^{-2}$}  & \color{red}{$6.20\times10^{-3}$} & \color{red}{$1.14\times10^{-3}$}

                                & \color{red}{$1.84\times10^{-1}$} & \color{red}{$3.51\times10^{-2}$}  & \color{red}{$6.61\times10^{-3}$} & \color{red}{$1.19\times10^{-3}$}
        \\
        \hline
        \multirow{2}{*}{$0.4$}  & $7.04\times10^{-1}$              & $5.80\times10^{-1}$               & $4.93\times10^{-1}$              & $4.26\times10^{-1}$
                                & $7.16\times10^{-1}$              & $5.88\times10^{-1}$               & $5.03\times10^{-1}$              & $4.38\times10^{-1}$                                      \\
                                & \color{red}{$4.44\times10^{-1}$} & \color{red}{$1.98\times10^{-1}$}  & \color{red}{$8.78\times10^{-2}$} & \color{red}{$3.90\times10^{-2}$}

                                & \color{red}{$4.61\times10^{-1}$} & \color{red}{$2.08\times10^{-1}$}  & \color{red}{$9.57\times10^{-2}$} & \color{red}{$4.38\times10^{-2}$}                         \\
        \hline
        \multirow{2}{*}{$0.45$} & $8.51\times10^{-1}$              & $7.83\times10^{-1}$               & $7.34\times10^{-1}$              & $6.93\times10^{-1}$
                                & $8.60\times10^{-1}$              & $7.93\times10^{-1}$               & $7.45\times10^{-1}$              & $7.06\times10^{-1}$                                      \\
                                & \color{red}{$6.69\times10^{-1}$} & \color{red}{$4.78\times10^{-1}$}  & \color{red}{$3.00\times10^{-1}$} & \color{red}{$2.01\times10^{-1}$}

                                & \color{red}{$6.93\times10^{-1}$} & \color{red}{$4.48\times10^{-1}$}  & \color{red}{$3.30\times10^{-1}$} & \color{red}{$2.28\times10^{-1}$}                         \\
        \bottomrule[0.9pt]
    \end{tabular}
    \label{table:withholding-nc-crystal}
\end{table*}

\subsection{Selfish Mining and Double-spending attacks}
In this section, we run Monte Carlo simulations to show how Crystal thwarts selfish mining and double-spending attacks.
We consider two settings of the maximum block propagation delay: $\Delta = 0$ seconds and $\Delta = 10$ seconds (The measured block propagation delay in Bitcoin~\cite{Decker2013}).
In particular, if an honest block of height $l$ is first mined at time $t$, all the honest blocks found before $t + \Delta$ (resp, $t + 2\Delta$) are considered to be a forking block of the same height $l$ in NC (resp. Crystal).
In this way, we can consider the impact of block propagation delay on the attacks.

\subsubsection{Selfish Mining} Figure~\ref{fig: selfishmining} shows the proportion of block rewards won by attackers in Crystal and Bitcoin with different malicious computation power.
\jn{First, the results show that when considering the message delay (10s and 60s), the attacker can win a higher fraction of block rewards in Crystal. But the increase is not significant so it is reasonable for us to ignore the delay in the analysis.}
Second, the fraction of rewards won by the attacker in Crystal is lower than that of honest mining, no matter what the malicious computation power $\alpha$ is.
By contrast, in Bitcoin, when $\alpha \geq 0.25$, the attacker can gain more revenue through the selfish mining attack.
Therefore, the results show that Crystal can significantly thwart the selfish mining attack, even considering the network delay.

\subsubsection{Double-spending attack.}
Table~\ref{table:withholding-nc-crystal} shows the probability of a successful double-spending attack in NC and Crystal for different $\alpha$ and confirmed block $k$. \niu{Here, we use Equation~(1) in~\cite{rosenfeld2014analysis} to compute the associated probability of the double-spending attack.}
Let us first look at the results when $\Delta = 0$.
The results show that given some $k$ and $\alpha$, the probability of success for the attack in Crystal is significantly smaller than that in Bitcoin.
For example, when $k$ is large and $\alpha$ is small (e.g., $k = 8$ and $\alpha \leq 0.3$), the probability of success for the double-spending attack in Crystal can decrease by almost two orders of magnitude compared with that in Bitcoin.
On the other hand, given the same probability of success and malicious computation power $\alpha$, participants in Crystal can confirm a transaction faster.
For example, when $\alpha = 40\%$ and $k = 8$, an attacker can succeed with $42.6\%$ probability, whereas in Crystal, the attacker can only succeed with the almost same probability of success when $k = 2$. Therefore, in this case, participants can reduce the transaction latency from $80$ minutes in Bitcoin to $20$ minutes in Crystal.

When considering the network delay, both the probability of success for the attack in Crystal and Bitcoin are a little higher than the results with $\Delta = 0$.
This is because, honest participants may not contribute their computation power to the longest chain, reducing the security against the attacker.
However, the results show that Crystal still can thwart double-spending attacks compared with NC, even in the worst case that the delay for collecting a QC is assumed to be as long as the block propagation delay.

\section{Discussion} \label{sec:disscussion}
In this section, we first discuss the impact of offline participants. Then, we discuss how Crystal can be extended in a more general form to use QC and the extension to other PoX blockchain protocols.

\subsection{Offline Participants} \label{subsec:offline}
Crystal is a permissionless blockchain, participants can join or leave the system anytime.
If a participant whose blocks are in the sliding window leaves the network, it may not vote for subsequently mined blocks when it is elected into blocks' committees.
As a result, the failure probabilities of the good committee property of blocks' committees will increase.
Furthermore, some honest blocks cannot be certified with a QC and so be extended.

Crystal leverages two methods to address the issues caused by offline participants.
First, Crystal provides incentives for participants to stay online.
Specifically, these participants can receive voting rewards if they are elected into blocks' committees and also vote for them (See Section~\ref{subsec: reward}).
These rewards are set much higher than the cost of staying online and participating in protocols.
Hence, this can prevent a large fraction of participants who have the potential to vote from staying offline.
Second, Crystal adopts suitable parameters to make the impact caused by offline block owners small.
We choose a larger sub-committee size $m$ than needed.
Figure~\ref{fig: offline} shows the failure probabilities of certifying an honest block with different ratios of offline participants.
Here, we assume that Byzantine participants always stay online and that they do not vote for honest blocks at all.
The figure shows that even though about $10\%$ of participants leave the network, the failure probability is still below $10^{-3}$.
However, on today's Internet, it is infeasible to keep so many participants to stay offline.

\begin{figure}[t]
    \centering
    \setlength{\abovecaptionskip}{10pt}   
    \setlength{\belowcaptionskip}{10pt}   
    \includegraphics[width=0.65\linewidth]{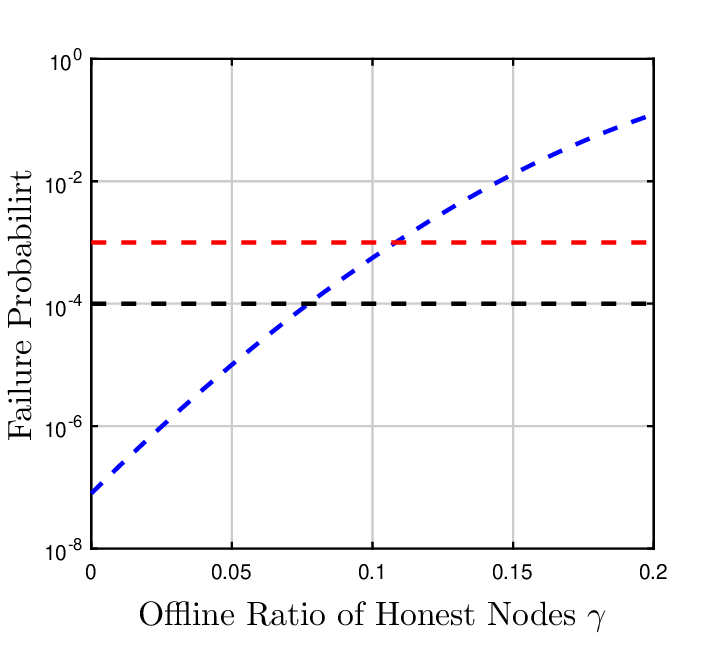}
    \caption{The failure probability $p$ of certifying honest blocks with different ratios $\gamma$ of offline participants.}
    \vspace{-2mm}
    \label{fig: offline}
\end{figure}

\subsection{Quorum Certificate Distance}
In Crystal, each block has to contain a QC referencing its parent block (see Section~\ref{sec: protocol}).
However, this usage of the QC in Crystal is just one special case and can be extended to more general ways.
For clarity, we first introduce the concept of certificate distance, which is the height difference between a block and the block that its quorum certificate is referencing.
\jn{
Figure~\ref{fig: framework} illustrates two use cases of certificate distance one (used in Crystal) and two. 
Suppose a miner has mined block B in both cases. 
In the first case, to mine the next block C, the miner has to publish block B and waits an additional period for forming the QC of block B.  
By contrast, in the second case, the miner may directly produce block C without publishing block B since the QC of block A could be formed before block B is mined. 
From the above cases, we can see that a larger certificate distance will allow an attacker to withhold more blocks (i.e., a decrease in mining transparency), but leads to a shorter time for forming QCs.}
In a nutshell, by controlling the certificate distance, we can make a trade-off between the mining transparency and the delay of waiting for QCs.

\begin{figure}[t]
    \centering
    \includegraphics[width=3.3in]{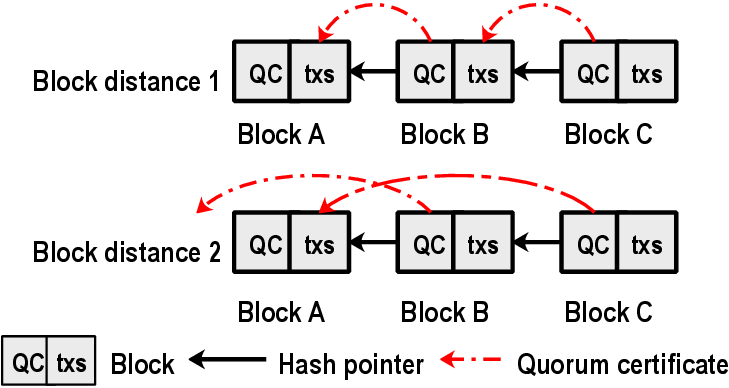}
    \caption{\textbf{A simple illustration of certificate distance one and two.}}
    \vspace{-2mm}
    \label{fig: framework}
\end{figure}

\subsection{Proof-of-X}
In PoW, miners produce blocks by solving mathematical puzzles, which require massive computation power and consume a huge amount of electricity.
\jn{To make NC more efficient, many alternatives such as Proof-of-Stake (PoS)~\cite{bentov2016snow, epos, Gilad2017}, Proof-of-Authority (PoA)~\cite{Parity}, as well as Proof-of-Elapsed-Time (PoET)~\cite{bowman2021elapsed} are proposed.
These alternatives can be generalized as Proof-of-X (PoX), where X represents the used resource to thwart the Sybil attack~\cite{sybil2002}. Most of the PoX-based blockchain protocols~\cite{bentov2016snow, Parity,bowman2021elapsed} utilize NC to make participants reach the consensus of ordered transactions such that they also suffer from block withholding.
Therefore, the idea of Crystal can also be applied to these protocols to prevent block withholding behaviors. 
Also note that some PoX-based protocols such as e-Pos~\cite{epos} or Algorand~\cite{Gilad2017}, which adopt Byzantine Fault Tolerant (BFT) consensus rather than NC, do not have block withholding issues. 
This is because the voting process in BFT consensus enforces a block to be published before it can be appended to the blockchain. 
}

\subsection{\jn{Beyond Synchronous Network}}
\jn{In this work, the security analysis (e.g., safety and liveness properties) of Crystal is done under the synchronous network assumption, i.e., there is a delay bound $\Delta$ for honest participants' messages. 
This is also known as the non-lock-step synchronous settings, which have been widely used in previous works~\cite{Pass2017, Kiffer2018, ren2019analysis, pass2017sleepy, Jing2021, niu2019analysis, Race2020}.
However, some measurement studies of the Bitcoin network show that some honest participants may not synchronize on a newly published block due to network attacks~\cite{Saad2019, Hijacking, eclipse2015}, unreachable participants as well as inefficient protocol design~\cite{SaadRoot, Saad2021}.
In other words, the synchronous network assumption of Bitcoin may not hold in the above cases. This also works for Crystal since it has the same network setting as Bitcoin. 
Therefore, for comprehensive security analysis, it is better to extend the synchrony model in Crystal to weakly synchronous or asynchronous network models.} 

\jn{The extension of the analysis to weakly synchronous or asynchronous network models is challenging. This is because a weaker network guarantee not only worsens the existing attacks (e.g., double-spending attacks~~\cite{SyncAttack}), but also leads to new attacks (e.g., HashSplit attack in ~\cite{Saad2021}). More attack strategies make a formal end-to-end security analysis significantly complicated. 
Fortunately, there are several state-of-the-art studies to analyze the security of Bitcoin in these networks~\cite{Saad2021, SyncAttack, sankagiri2021longest, ameen2022blockchain}. Saad et al. in~\cite{Saad2021} analyze the safety and chain quality of Bitcoin in the asynchronous network. They also propose a new attack, called HashSplit, in which an attacker can create multiple forking branches to affect the chain quality property. 
In~\cite{SyncAttack}, the double-spending attack is revisited in a weakly synchronous network. 
Ameen et al. in~\cite{ameen2022blockchain} consider the impact of message losses on the security of Bitcoin, while Sankagiri et al. in~\cite{sankagiri2021longest} take the random communication delay into account. Since Crystal inherits most components from Bitcoin, we believe that these studies can also be adapted to Crystal. What is more, Crystal may have better resilience to these network-level attacks since the usage of QCs is not affected by the network conditions and still can enforce an attacker to publish its blocks. We leave the extended analysis of Crystal as one of our future works.}

\section{Related Work}\label{sec:related work}
In this section, we provide related work that aims to decrease the efficiencies of block withholding in selfish mining or double-spending attacks.
Specifically, these protocols can be divided into three classes.
The first one is to modify the fork choice rule of NC; the second one is to change the reward mechanism; the third one is to redesign the consensus protocol.

\begin{table}[t]
    \centering
    \caption{\textbf{Comparisons with related designs.}}

    \begin{tabular}{@{}llll@{}}
        \toprule[1pt]
        Protocol                             & \tabincell{l}{Mitigate  selfish                   \\mining}  & \tabincell{l}{Thwart \\ double-spend}  & \tabincell{l}{Safety \\ Proof}  \\
        \midrule
        Bitcoin~\cite{nakamoto2012bitcoin}   & \xmark                          & \xmark & \cmark \\

        FruitChain~\cite{Pass2017fruit}      & \cmark$^{\lozenge}$             & \xmark & \cmark \\

        Perish or Publish~\cite{perishzhang} & \cmark                          & \xmark & \xmark \\

        Bobtail~\cite{Bissias2017BobtailAP}  & \cmark                          & \cmark & \xmark \\

        StrongChain~\cite{strongchain}       & \cmark                          & \cmark & \xmark \\

        Prism~\cite{pass2017hybrid}          & \xmark                          & \cmark & \cmark \\
        \hline
        \textbf{Crystal}                     & \cmark                          & \cmark & \cmark \\
        \bottomrule[0.9pt]
    \end{tabular}
    \begin{tablenotes}
        \item $\lozenge$ Zhang and Preneel showed that FruitChain is vulnerable to the selfish mining attack with reasonable parameters~\cite{Zhang2019CommonMetrics}.
    \end{tablenotes}
    \label{table:comparision}
\end{table}

Zhang and Preneel suggested a new fork-resolving policy, called Publish or Perish in~\cite{perishzhang}, which selectively neglects blocks that are not published in time and resolve forks by chain weight (rather than the length). 
The analysis shows the effectiveness of thwarting the selfish mining attack, but it is hard to detect an abnormal delay in a permissionless blockchain network that does not have reliable synchronized clocks.
Besides, honest blocks may also be delayed due to bad network connections or network attacks, and the rejection of these blocks may bring new economic losses to honest participants.
Szalachowski et al.~\cite{strongchain} proposed a new protocol, called StrongChain, which enables participants to publish weak solutions, i.e., solutions with lower mining targets.
Participants can include weak solutions in their blocks, and always select the chain with the largest weighted count of blocks and weak solutions to mine on.
Similar to StrongChain, Bissias and Levine~\cite{Bissias2017BobtailAP} proposed Bobtail, which enables participants to publish and collect all PoW solutions with lower difficulties until the mean of the $k$ smallest hashes satisfies a certain difficulty.
Although both StrongChain and Bobtail showed that they can thwart the selfish mining and double-spending attacks through simulations, it is unclear whether an adaptive attacker can invent protocol-specific attacks to undermine their security.
The lack of formal end-to-end proof for safety makes us worry about their security.
For example, we found that when the ratio of blocks with weak solutions is high in StrongChain, there may have a balance attack that will break the safety~\cite{yu2018ohie}.

The second type of protocol is to change the reward mechanism to defend against incentive-based attack~\cite{Pass2017fruit,wood2014ethereum}.
Ethereum~\cite{wood2014ethereum} introduces a new reward, called uncle rewards, for these blocks that are not included in the longest chain.
However, studies show that Ethereum’s uncle rewards raise the selfish mining profit and lower the computation power threshold to perform the attack~\cite{niu2019selfish}.
Pass and Shi~\cite{Pass2017fruit} proposed Fruitchains, which distributes rewards to all recent fruits that are parallel products of block mining. The authors proved that as long as some parameters are well-chosen, a fruit ignored by an attacker block will still be added to the blockchain by some honest blocks, therefore the protocol can guarantee fair reward distribution. However, the protocol can only defend the selfish mining attack but is not effective for other attacks related to block withholding.

The third type is to redesign the consensus protocol~\cite{bagaria2019deconstructing}.
Prism~\cite{bagaria2019deconstructing} deconstructs the functions of a block in NC into three types of blocks: proposer blocks, transaction blocks, and voter blocks.
Besides, by consisting of multiple parallel chains, Prism can greatly improve the throughput and reduce transaction latency by thwarting double-spending attacks.
However, Prism still adopts the longest chain rule, which implies that it does not defend against the selfish mining attack.
Besides, Prism does not provide any reward design, so whether it is secure against incentive-based attacks is unknown.

\section{Conclusion and Future Work} \label{sec:conclusion}
In this paper, we proposed a new transparent NC-style protocol, called Crystal, which leverages quorum certificates to thwart block withholding.
We have shown that Crystal can effectively mitigate selfish mining and double-spending attacks. We also provided end-to-end formal safety proofs for Crystal, which can guarantee that Crystal does not introduce new serious attacks.
Besides, we have evaluated the overhead and the additional delay caused by quorum certificates.
We have two directions for future work.
\jn{First, we can extend the synchronous network model by considering message losses and non-uniform propagation delay. The extension ensures a more accurate security analysis of Crystal.}
Second, we may apply the idea behind Crystal to other PoX-based blockchains.

\bibliographystyle{IEEEtran}
\bibliography{reference}

\appendices
\section{Concentration Bounds}
In this section, we provide the concentration bounds that we use in our analysis. We denote the probability of an event $E$ by $\Pr[E]$ and the expected value of a random variable $X$ by $\e{X}$. We will use the following bounds in our proofs.

\begin{lemma}[Chernoff bound] \label{lem:chernoff}
    Let $X_1$, . . . ,$X_n$ be independent random variables, and let $\mu := \e{\sum_{i = 1}^{n}X_i}$. For $0 < \delta < 1$, we have $ \Pr[\sum_{i = 1}^{n}X_i >(1 + \delta)\mu] < e^{-\Omega(\delta^2 \mu)}$ and $\Pr[\sum_{i = 1}^{n}X_i <(1- \delta)\mu] < e^{-\Omega(\delta^2 \mu)}$
\end{lemma}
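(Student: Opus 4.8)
The plan is to prove both tail bounds by the exponential moment (Chernoff--Cram\'er) method. This method requires the summands to be bounded, so I would first observe that throughout the paper the lemma is applied only to indicator variables --- the per-block membership indicators behind Lemma~\ref{lem: majority} and the per-interval block counts feeding Lemma~\ref{lemma:converged} --- and therefore it suffices to treat the case $X_i \in \{0,1\}$ with $\Pr[X_i = 1] = p_i$, which I adopt. Writing $S = \sum_{i=1}^{n} X_i$ and $\mu = \e{S} = \sum_i p_i$, the starting point for the upper tail is Markov's inequality applied to the exponentiated sum: for any $t>0$,
\[
\Pr\!\left[S > (1+\delta)\mu\right] = \Pr\!\left[e^{tS} > e^{t(1+\delta)\mu}\right] \le \frac{\e{e^{tS}}}{e^{t(1+\delta)\mu}}.
\]

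Next I would exploit independence to factor the moment generating function, $\e{e^{tS}} = \prod_{i=1}^{n} \e{e^{tX_i}}$, and bound each factor by $\e{e^{tX_i}} = 1 + p_i(e^t-1) \le \exp\!\big(p_i(e^t-1)\big)$, where the last step is the elementary inequality $1+x \le e^x$. Multiplying over $i$ collapses the product to $\exp\!\big(\mu(e^t-1)\big)$, so the tail probability is at most $\exp\!\big(\mu(e^t-1) - t(1+\delta)\mu\big)$. Optimizing this exponent over $t$ by setting $t = \ln(1+\delta)$ yields the clean bound $\big(e^{\delta}/(1+\delta)^{1+\delta}\big)^{\mu}$.

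The final and most delicate step is to convert this into the advertised $e^{-\Omega(\delta^2\mu)}$ form. Here I would invoke the standard elementary estimate $e^{\delta}/(1+\delta)^{1+\delta} \le e^{-\delta^2/3}$, valid for $0<\delta<1$, which follows from a second-order Taylor expansion of $(1+\delta)\ln(1+\delta) - \delta$; this gives $\Pr[S>(1+\delta)\mu] < e^{-\mu\delta^2/3}$. The lower tail is handled symmetrically by taking $t<0$ (equivalently, applying the same argument to $-X_i$), producing $\big(e^{-\delta}/(1-\delta)^{1-\delta}\big)^{\mu} \le e^{-\mu\delta^2/2}$ for $0<\delta<1$. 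Since both exponents are $\Omega(\delta^2\mu)$, the lemma follows. I expect the main obstacle to be the transcendental inequalities bounding $e^{\pm\delta}/(1\pm\delta)^{1\pm\delta}$: they are routine but require care with the sign of the remainder term in the Taylor expansion, and they are the only place where the restriction $0<\delta<1$ is genuinely used.
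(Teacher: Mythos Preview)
Your proof proposal is correct and follows the classical Chernoff--Cram\'er exponential-moment argument, including the reduction to Bernoulli summands, the factorization of the moment generating function, the optimization at $t=\ln(1\pm\delta)$, and the Taylor-type estimates yielding the constants $1/3$ and $1/2$ in the exponent. There is no gap.

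However, the paper itself does not supply a proof of this lemma at all: it is listed in the appendix on concentration bounds as a standard tool, alongside the Poisson tail and the dependent Chernoff bound, and is simply invoked later (e.g., in Lemma~\ref{lem: majority} and Lemma~\ref{lemma:converged}) without derivation. So there is nothing to compare against---your write-up is more detailed than what the paper provides, and would serve as a self-contained justification where the paper relies on the reader's familiarity with the result.
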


\begin{lemma}[Poisson tail] \label{lem:poisson} Let $X$ be a Poisson random variable with rate $\mu$ (which is also its expectation). For $0 < \delta < 1$, we have $\Pr[X > (1+\delta)\mu] < e^{-\Omega(\delta^2 \mu)}$ and $\Pr[X > (1-\delta)\mu] < e^{-\Omega(\delta^2 \mu)}$.
\end{lemma}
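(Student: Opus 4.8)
The plan is to prove both tail bounds by the standard Chernoff exponential-moment method, which here is especially clean because the moment generating function of a Poisson variable has a closed form. Recall that for $X \sim \mathrm{Poisson}(\mu)$ one has $\e{e^{tX}} = e^{\mu(e^{t}-1)}$ for every real $t$; this single identity drives the entire argument. For the upper tail I would fix $t > 0$ and apply Markov's inequality to the nonnegative variable $e^{tX}$, obtaining
\[
\Pr[X > (1+\delta)\mu] \le \frac{\e{e^{tX}}}{e^{t(1+\delta)\mu}} = \exp\bigl(\mu\,(e^{t}-1-t(1+\delta))\bigr).
\]
Minimizing the exponent over $t>0$ gives the optimizer $t = \ln(1+\delta)$ and the classical bound $\Pr[X > (1+\delta)\mu] \le \exp(-\mu\,((1+\delta)\ln(1+\delta)-\delta))$.

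The lower-tail deviation (the intended reading of the second inequality) is handled symmetrically: for $t>0$ I would apply Markov to $e^{-tX}$, using $\e{e^{-tX}} = e^{\mu(e^{-t}-1)}$, which yields
\[
\Pr[X < (1-\delta)\mu] \le \exp\bigl(\mu\,(e^{-t}-1+t(1-\delta))\bigr),
\]
and optimizing at $t = -\ln(1-\delta)$ produces $\Pr[X < (1-\delta)\mu] \le \exp(-\mu\,(\delta+(1-\delta)\ln(1-\delta)))$. The only nontrivial step remaining is to collapse these exact exponents into the advertised $e^{-\Omega(\delta^2\mu)}$ form. For this I would invoke the elementary scalar inequalities $(1+\delta)\ln(1+\delta)-\delta \ge \delta^2/3$ and $\delta+(1-\delta)\ln(1-\delta) \ge \delta^2/2$, both valid for $0<\delta<1$; each follows from a short Taylor-expansion or convexity estimate of the function on the left. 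With these in hand both probabilities are at most $e^{-\mu\delta^2/3}$, i.e.\ $e^{-\Omega(\delta^2\mu)}$, as claimed. I expect this scalar-inequality verification, rather than any probabilistic reasoning, to be the main obstacle, since the bounds must hold uniformly across the whole interval $(0,1)$.

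As an alternative that reuses machinery already in the appendix, I could instead realize $X$ as a limit: a $\mathrm{Poisson}(\mu)$ variable is the distributional limit of $S_n = \sum_{i=1}^{n} B_i$ with i.i.d.\ $B_i \sim \mathrm{Bernoulli}(\mu/n)$, so $\e{S_n}=\mu$ for all $n$. Lemma~\ref{lem:chernoff} then gives $\Pr[S_n > (1+\delta)\mu] < e^{-\Omega(\delta^2\mu)}$ and $\Pr[S_n < (1-\delta)\mu] < e^{-\Omega(\delta^2\mu)}$ with a constant independent of $n$, and passing to the limit (choosing the thresholds to be continuity points of the limiting distribution, or perturbing them infinitesimally) transfers the same bounds to $X$. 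I would present the direct moment-generating-function computation as the main proof and record this reduction only as a brief remark, since it sidesteps the scalar inequalities at the cost of a limiting argument.
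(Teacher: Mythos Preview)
Your argument is correct and is the standard moment-generating-function derivation of the Poisson tail. Note, however, that the paper does not actually prove this lemma: it is listed in the appendix on concentration bounds as a known inequality, alongside the Chernoff bound and the dependent-variable Chernoff bound, and is quoted without proof. So there is no ``paper's proof'' to compare your approach against; you are simply supplying the omitted justification.

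Two minor remarks. First, you correctly read the second inequality as a lower-tail bound $\Pr[X < (1-\delta)\mu]$ despite the typo in the statement; the paper's later use of the lemma (e.g., in the proof of Lemma~\ref{lemma:converged2} and in bounding $N(t)$ in the safety theorem) confirms that both an upper and a lower tail are intended. Second, your explicit constant $\delta^2/3$ in the upper tail matches exactly the form the paper invokes later, namely $\Pr(N(t)\ge (1+\delta)ft)\le e^{-\delta^2 ft/3}$, so the scalar inequality $(1+\delta)\ln(1+\delta)-\delta \ge \delta^2/3$ is precisely the one the authors have in mind. Your alternative route through the binomial-to-Poisson limit and Lemma~\ref{lem:chernoff} is also valid but, as you say, is better left as a remark.
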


\begin{lemma} [Chernoff bound for dependent random variables \cite{niu2019analysis}] \label{lem:key_step}
    Let $T$ be a positive integer. Let $X^{(j)} = \sum_{i = 0}^{n-1} X_{j + iT}$ be the sum of $n$ independent indicator random variables and $\mu_j = \e{ X^{(j)} }$ for $j \in \{1, \ldots, T\}$. Let $X = X^{(1)} + \cdots + X^{(T)}$. Let $\mu = \min_j \{ \mu_j \}$. Then, for $0 < \delta < 1$, $\Pr\left[ X \le (1 - \delta) \mu T \right] \le e^{-\Omega(-\delta^2 \mu / 2)}$.
\end{lemma}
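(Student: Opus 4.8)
The plan is to reduce the lower-tail bound on the \emph{dependent} sum $X$ to lower-tail bounds on the $T$ subsums $X^{(1)}, \ldots, X^{(T)}$, each of which is by hypothesis a sum of \emph{independent} indicators and therefore amenable to the standard Chernoff bound of Lemma~\ref{lem:chernoff}. The dependence across the whole sequence $X_1, X_2, \ldots$ is sidestepped entirely: we never need the groups to be mutually independent, only that independence holds \emph{within} each group. This interleaving-by-stride-$T$ structure is exactly what arises in the converged-block analysis, where indicators in far-apart slots are independent but nearby ones are correlated through the $\Delta$ delay.

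The first and central step is a deterministic pigeonhole observation. Since $\mu = \min_j \mu_j$ we have $\mu_j \ge \mu$ for every $j$, and $X = \sum_{j=1}^{T} X^{(j)}$. Hence if every subsum exceeded its fair share, i.e.\ $X^{(j)} > (1-\delta)\mu$ for all $j$, then $X > (1-\delta)\mu T$. Contrapositively,
\[
\{ X \le (1-\delta)\mu T \} \ \subseteq\ \bigcup_{j=1}^{T} \{ X^{(j)} \le (1-\delta)\mu \}.
\]
This is what converts a statement about the dependent quantity $X$ into a statement about the individually-independent quantities $X^{(j)}$.

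Second, I would apply the union bound and then the per-group Chernoff bound. For each $j$, the threshold satisfies $(1-\delta)\mu \le (1-\delta)\mu_j$ because $\mu \le \mu_j$, so $\Pr[X^{(j)} \le (1-\delta)\mu] \le \Pr[X^{(j)} \le (1-\delta)\mu_j]$. As $X^{(j)}$ is a sum of independent indicators with mean $\mu_j$, Lemma~\ref{lem:chernoff} gives $\Pr[X^{(j)} \le (1-\delta)\mu_j] \le e^{-\Omega(\delta^2 \mu_j)} \le e^{-\Omega(\delta^2 \mu)}$. Summing over the $T$ groups yields $\Pr[X \le (1-\delta)\mu T] \le T\, e^{-\Omega(\delta^2 \mu)}$.

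Finally, I would absorb the polynomial prefactor $T$ into the exponent: writing $T\, e^{-c \delta^2 \mu} = e^{\ln T - c\delta^2 \mu}$, the additive $\ln T$ term is dominated by the exponential decay in the regime of interest, leaving $e^{-\Omega(\delta^2 \mu)}$; the factor $1/2$ appearing in the statement provides exactly the slack to swallow this lower-order $\ln T$ contribution. The main obstacle is really just the correct setup of the first step—verifying that the indexing $X_{j+iT}$ genuinely partitions the variables into $T$ groups each of which is internally independent, so that Lemma~\ref{lem:chernoff} applies group-by-group; once that reduction is in place, the union bound and the $T$-factor absorption are routine.
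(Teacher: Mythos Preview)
The paper does not actually supply a proof of this lemma; it is stated as a cited result from~\cite{niu2019analysis} and used as a black box (with $T=2$) in the proof of Lemma~\ref{lemma:converged}. Your argument is the standard and correct one: the pigeonhole inclusion $\{X \le (1-\delta)\mu T\} \subseteq \bigcup_j \{X^{(j)} \le (1-\delta)\mu\}$ reduces the dependent lower tail to a union of per-group lower tails, and within each group the ordinary Chernoff bound of Lemma~\ref{lem:chernoff} applies.

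One small caveat on your last step: absorbing the prefactor $T$ into the exponent to obtain $e^{-\Omega(\delta^2\mu/2)}$ tacitly requires $\ln T = O(\delta^2\mu)$, which the lemma statement does not assert. In the paper's sole application $T=2$ is a fixed constant, so this is harmless; but in full generality you should either leave the bound as $T\,e^{-\Omega(\delta^2\mu)}$ or record the mild side condition $\ln T \le c\,\delta^2\mu$ needed for the absorption.
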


\section{Selfish Mining Analysis} \label{appen: selfish}
\vspace{1mm} \noindent \textbf{Theorem 1.} \emph{ For $\alpha \in [0, 0.4)$, the attacker in Crystal cannot gain a higher fraction of blocks included in the longest chain than $\alpha$ in the long term.
}

\begin{proof}
    First, by solving the above Markov model, we can obtain the steady-state distribution of each state as follows:
    \begin{equation} \nonumber
        \begin{split}
            \pi_0 = \frac{\beta}{1 + \alpha \beta}, \quad \pi_{1} = \frac{\alpha}{1 + \alpha \beta} , \quad
            \pi_{0^{\prime}} = \frac{\alpha \beta}{1 + \alpha \beta}.
        \end{split}
    \end{equation}
    Next, we can analyze the blocks obtained by the attacker and by the honest participants in each state transition, respectively. In particular, we focus on the blocks that eventually end up in the main chain. We detail the blocks on each event as follows:
    \begin{itemize} [leftmargin=*]
        \item \emph{Case $a$: $S_0 \xrightarrow{\beta} S_0$.} In this case, honest participants generate one block and the attacker adopt it, which happens with probability $\beta$;

        \item \emph{Case $b$: $S_{0^{\prime}} \xrightarrow{\alpha} S_0$.} The attacker mines another block on its forking branch, which happens with probability $\alpha$. All participants will accept the attacker's branch due to the longest chain rule, and so the attacker has two blocks in the main chain.

        \item \emph{Case $c$: $S_{0^{\prime}} \xrightarrow{\gamma \beta}S_0$.} Some honest participants mine the next block on the attacker's forking branch, which happens with probability $\gamma \beta$. Both the attacker's block and the new honest block are included in the main chain.

        \item \emph{Case $d$: $S_{0^{\prime}} \xrightarrow{(1-\gamma)\beta}S_0$.} Some honest participants mine the next block on the honest participants' fork branch, which happens with probability $(1 - \gamma) \beta$. The attacker will accept these two blocks and mine on top of them.
    \end{itemize}
    Thus, we can obtain the blocks mined by the attacker and honest participants
    \begin{equation}
        r_a = \overset{\text{case $b$}}{\overbrace{\pi_{0^{\prime}} \cdot \alpha \cdot 2 }} + \overset{\text{case $c$}}{\overbrace{\pi_{0^{\prime}} \cdot \gamma \beta \cdot 1 }} = \frac{\alpha\beta(2\alpha+\gamma\beta)}{1+\alpha\beta},
    \end{equation}

    \begin{equation}
        \begin{split}
            r_{others} &= \overset{\text{case $a$}}{\overbrace{\pi_{0} \cdot \beta \cdot 1 }} +  \overset{\text{case $c$}}{\overbrace{\pi_{0^{\prime}} \cdot \gamma \beta \cdot 1 }} + \overset{\text{case $d$}}{\overbrace{\pi_{0^{\prime}} \cdot (1-\gamma) \beta \cdot 2 }} \\
            &= \frac{\beta^2 (1 + 2\alpha- \gamma\alpha)}{1+\alpha\beta}.
        \end{split}
    \end{equation}
    Adding these up, we can obtain that the fraction of blocks in the main chain created by the attacker is
    \begin{equation} \label{eq:selfish}
        R_a = \frac{r_a}{r_a + r_{others}} = \frac{\alpha (2\alpha + \gamma \beta)}{\beta + 2\alpha} = \alpha + \frac{\alpha(1-\alpha)(\gamma -1)}{1+\alpha}.
    \end{equation}
    For the last step, note that $\beta = 1 - \alpha$. As $0 \leq \gamma \leq 1$ and $0 \leq \alpha \leq 1$, the item $\frac{\alpha(1-\alpha)(\gamma -1)}{1+\alpha} \leq 0$. Hence, the fraction of blocks in the main chain owned by the attacker after launching a selfish mining attack is no larger than $\alpha$. The proof is done.
\end{proof}

\section{Safety Proofs} \label{appen: safety proof}

\vspace{1mm} \noindent \textbf{Lemma 2.} \emph{If an honest block $B_{\ell}$ is mined at the time $t_0$, then every honest participant can start to mine a block of at least height $\ell+1$ by the time $t_0 + 2\Delta$.}

\begin{proof}
    First, this honest block $B_{\ell}$ will reach all the honest participants by time $t_0 + \Delta$.
    Second, its parent block (no matter honest or malicious) will reach all the honest participants by time $t_0 + \Delta$. This argument applies to all of its ancestor blocks.
    Then, all honest committee members on the committee $\mathcal{M}_{B_{\ell}}$ will sign the digest of this block and broadcast their signatures.
    Finally, by the time $t_0 + 2\Delta$, every honest participant will observe a chain consisting of this block, its ancestor blocks, and/or other (honest or malicious) blocks mined on top of this block.
    If there are no such new blocks, the chain length is $\ell$. Otherwise, the chain length is greater than $\ell$.
    Every honest participant can start to mine for blocks with at least a height of $\ell+1$.

    \vspace{1mm} \noindent \textbf{Lemma 3.} Suppose block $B$ is a converged block of height $\ell$, then the block $B$ is the only honest block of height $\ell$.

    \begin{proof}
        Suppose for contradiction that two honest blocks $B$ and $B'$ of height $\ell$ are mined at time $t$ and $t'$ respectively.
        Since no other honest block is mined between time $t-2\Delta$ and $t+2\Delta$, we have have $t' \ge t+2\Delta$ or $r' \le t-2\Delta$. If $t' \ge t + 2\Delta$, by Lemma~\ref{lem:delta}, every honest participant observes a chain of length at least $\ell$ by time $t'$ and mines top of it. Therefore, no honest participant will mine a new block of height $\ell$ after time $t'$, leading to a contradiction.
        Similarly, if $t' \le t-2\Delta$, every honest participant observes a chain of length at least $\ell$ before the time $t$ (or even earlier), leading to a contradiction.
    \end{proof}

    Next, we prove the bounds on the number of converged blocks in a time interval of $t$.
    To achieve this, we first apply the Chernoff bound to obtain a bound on the number of blocks mined in the given interval, and then use Chernoff again to bound how many of these blocks are converged.
    The proof follows some ideas from prior works~\cite{niu2019analysis}. \\

    \vspace{1mm} \noindent \textbf{Lemma 4.} Let $\eta = e^{-2 \beta f \Delta }$. For any $0 < \delta <1$, the number of converged blocks mined in a time interval $t$ is at least $(1+\delta) \eta^2  \beta f t $, except with probability $e^{-\Omega\left(\delta^2 \eta^2 \beta f t \right)}$.

    \begin{proof}
        Let $N_H(t)$ denote the number of honest block mined in a time interval $t$, and note that $\e{N_H(t)} = \beta f t$.
        Then, for any $\delta_1 \in (0, 1)$, we have $\Pr[N_H(t) \leq (1 - \delta_1)\beta f t] \leq e^{-\Omega\left(\delta_1^2 \beta f t\right)}$ by Lemma~\ref{lem:chernoff}.
        In particular, let $n = (1 - \delta_1)\beta ft$ be an integer by choosing a suitable $t$.
        We enumerate the first $n$ honest blocks mined since the start of the time interval as blocks $1, 2, . . . , n$.
        Without loss of generality, we assume there is a block $0$ (resp. block $n+1$) that is the last honest block mined before (resp. after) the interval.

        Let $X_i$ denote the block interval between $(i-1)$-th and $i$-th block. Recall the mining process of honest participants is the Poisson process with rate $\beta f$. Hence, $X_i$ follows i.i.d. exponential distribution with the same rate.
        Let $Y_i$ denote an indicator random variable which equals one if the $i$-th block is converged block and zero otherwise. Define $Y = \sum_{i=1}^{n}Y_i$.
        It is easy to see that the $i$-th block is a converged block if $X_i \geq 2\Delta$ and $X_{i+1} \geq 2\Delta$. Since $X_i$ and $X_{i+1}$
        are independent, we have $\Pr[Y_i = 1] = \Pr[X_i \geq 2\Delta]\Pr[X_{i+1} \geq 2\Delta] = e^{-4 \beta f \Delta} = \eta^2$.
        Note that $Y_i$ and $Y_{i+1}$ are not independent since they both depend on the event that $X_{i+1} \geq 2\Delta$, but $Y_i$ and $Y_{i+2}$ are independent.
        Thus, $Y$ can be broken up into two summations of independent Boolean random variables $Y = \sum_{odd}{Y_i} + \sum_{even}{Y_i}$.
        By Lemma~\ref{lem:key_step}, we have $\Pr[Y \leq (1-\delta) \eta^2 \beta ft] \leq e^{-\Omega\left(\delta^2 \eta^2 \beta f t \right)}$.
    \end{proof}

    \begin{lemma}\label{lemma:converged2}
        Suppose $\eta^2 \beta > (1+\delta) \alpha$. In a time interval $t$, the number of converged blocks is greater than the number of malicious blocks except for $e^{-\Omega\left(\delta^2 \min\{ \eta, \beta f \Delta  \}  m \right)}$ probability.
    \end{lemma}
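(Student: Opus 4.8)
The plan is to estimate the two counts separately and then exploit the multiplicative gap $\eta^2\beta > (1+\delta)\alpha$ to force a strict separation with overwhelming probability. Throughout I would work with the slack parameter $\delta_0 = \delta/3$.

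First I would lower-bound the number $N_C(t)$ of converged blocks mined in the interval. Applying Lemma~\ref{lemma:converged} with $\delta_0$ in place of $\delta$ gives $N_C(t) \ge (1-\delta_0)\,\eta^2\beta f t$ except with probability $e^{-\Omega(\delta_0^2 \eta^2 \beta f t)}$. Next I would upper-bound the number $N_A(t)$ of adversarial blocks: since adversarial mining is a Poisson process of rate $\alpha f$, the mean of $N_A(t)$ is $\alpha f t$, so the Poisson tail bound (Lemma~\ref{lem:poisson}) with parameter $\delta_0$ yields $N_A(t) \le (1+\delta_0)\,\alpha f t$ except with probability $e^{-\Omega(\delta_0^2 \alpha f t)}$. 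Conditioning on both good events — their joint failure probability being at most the sum, hence of order $e^{-\Omega(\delta^2\min\{\eta^2\beta,\alpha\}ft)}$, which is the claimed order once $ft$ is identified with the number of blocks mined in the interval — I would conclude with the chain
\[
N_C(t)\ \ge\ (1-\delta_0)\eta^2\beta f t\ >\ (1-\delta_0)(1+\delta)\,\alpha f t\ \ge\ (1+\delta_0)\,\alpha f t\ \ge\ N_A(t),
\]
where the strict step is the hypothesis $\eta^2\beta>(1+\delta)\alpha$ and the penultimate step uses the elementary inequality $(1-\tfrac{\delta}{3})(1+\delta)\ge 1+\tfrac{\delta}{3}$, valid for $\delta\in(0,1)$ because it is equivalent to $\delta(1-\delta)\ge 0$. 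Hence the converged blocks strictly outnumber the adversarial blocks on this event.

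The calculation itself is a routine two-sided concentration argument; the two care points are (i) the parameter bookkeeping that turns a single factor $(1+\delta)$ of headroom in the hypothesis into a genuine strict inequality \emph{after} both deviations are absorbed — the $\delta/3$ split together with the elementary inequality above is exactly what makes this go through while keeping the exponent linear in the number of interval blocks — and (ii) the fact that converged blocks and adversarial blocks are \emph{not} independent, so the two bounds must be combined by conditioning on the two high-probability events rather than by any product/independence argument. I expect (i) to be the only real obstacle, and it is mild; everything else reduces to invoking Lemmas~\ref{lemma:converged} and~\ref{lem:poisson} and a union bound.
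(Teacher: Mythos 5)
Your proposal is correct and follows essentially the same route as the paper's own proof: lower-bound the converged-block count via Lemma~\ref{lemma:converged}, upper-bound the adversarial count via the Poisson tail, split the slack parameter (you use $\delta/3$ where the paper uses $\delta/4$, which is immaterial), and combine with a union bound to get the exponent $e^{-\Omega(\delta^2\min\{\eta^2\beta,\alpha\}ft)}$. The parameter bookkeeping via $(1-\delta/3)(1+\delta)\ge 1+\delta/3$ is valid and plays exactly the role of the paper's observation that $\frac{1+\delta/4}{1-\delta/4}<1+\delta$.
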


    \begin{proof}
        Let $N(t)$ (resp. $N_A(t)$) denote the converged (resp. malicious) block mined in the interval $t$.
        By Lemma~\ref{lemma:converged}, the number of converged blocks is $N(t) > (1-\delta_1) \eta^2 \beta f t$ except for $e^{-\Omega\left(\delta_1^2 \eta^2 \beta f t \right)}$ probability.
        Similarly, by Lemma~\ref{lem:poisson}, $N_A(t) < (1 + \delta_2) \alpha f t$ except for $e^{-\Omega\left(\delta_2^2 \alpha f t\right )}$ probability.
        By setting $\delta_1 = \delta_2 = \delta/4$ and noticing $\frac{1 + \delta/4}{1 - \delta/4} < 1 + \delta$, we have
        $(1 - \delta_1) \eta^2 \beta > (1 + \delta_2) \alpha$.
        Therefore, $N(t) > N_A(t)$ except for $e^{-\Omega\left(\delta^2 \min\{ \eta^2 \beta, \alpha \}  ft \right)}$ probability.
    \end{proof}

    \vspace{1mm} \noindent \textbf{Theorem 3} (Safety)\textbf{.} Suppose $\eta^2 \beta > (1+\delta) \alpha$.
    If $B$ and $B'$ are two distinct blocks of the same height, then they cannot be
    both confirmed, each by an honest participant. This property holds, regardless of malicious action, except for $e^{-\Omega\left(\delta^2 \min\{ \eta^2 \beta, \alpha \} k \right)}$ probability.

    \begin{proof}
        Consider the event $\mathcal{E}$ that ``$B$ and $B'$ of the same height are both confirmed, each by an honest participant." We will show that
        this event happens with probability at most $e^{-\Omega\left(\delta^2 \min\{ \frac{\eta}{f \Delta}, \beta   \} k \right)}$, regardless
        of malicious action.
        Let $t_1$ (resp., $t_1'$) be the smallest time when $B$ (resp., $B'$) is confirmed. Without loss of generality, we assume that $t_1 \ge t_1'$. Let $B_1$ be the most recent ancestor of $B$ and $B'$. That is, there are two disjoint subchains mined on top of $B_1$, one containing $B$ and the other containing $B'$. Let $B_0$ be the most recent \emph{honest} ancestor of
        $B$ and $B'$. Note that $B_0$ can be $B_1$ (if $B_1$ is honest) or the genesis block. Suppose that $B_0$ is mined (by some honest participant) at time $t_0$. For convenience, we assume that the genesis block is mined at time $0$. This makes $t_0$ well defined. We next define the following two events:
        \begin{itemize}[leftmargin=*]
            \item $\mathcal{E}_1(t_0, t_1)$: At time $t_1$, there are two disjoint subchains mined on top of $B_1$, each containing at least $k$ blocks mined from time $t_0$ to time $t_1$;
            \item $\mathcal{E}_2(t_0, t_1)$: $N_H(t) \le N_A(t)$, where $t = t_1 - t_0$. Here, $N_H(t)$ (resp. $N_A(t)$) is the number of converged (resp. malicious) blocks mined in time $[t_0, t_1]$.
        \end{itemize}
        We will show that $\mathcal{E} \subseteq \mathcal{E}_1(t_0, t_1)  \subseteq \mathcal{E}_2(t_0, t_1)$, regardless of
        malicious action.

        \begin{itemize}[leftmargin=*]
            \item $\mathcal{E} \subseteq \mathcal{E}_1(t_0, t_1)$: At time $t_1$, one subchain contains $B$ as well as $k-1$ blocks mined on top of $B$ (due to the $k$-deep confirmation rule). Similarly, the other subchain contains $B'$ as well as $k-1$ other blocks on top of $B'$. These blocks cannot be mined before $t_0$, because $B_0$ is an honest block.

            \item $\mathcal{E}_1(t_0, t_1) \subseteq \mathcal{E}_2(t_0, t_1)$: We will show that whenever there is a converged block of height $\ell$ mined in time $[t_0, t_1]$, there must be a ``matching"  malicious block of height $\ell$ mined between time $t_0$ and time $t_1$.
                  To see this, suppose that a converged block $B^*$ without a matching malicious block. By Lemma~\ref{lem:delta}, $B^*$ has a larger height than $B_0$.
                  On the one hand, if $B^*$ has a smaller height than $B$,
                  then $B^*$ must be an honest ancestor of $B$ and $B'$. This is because $B^*$ is the only block at its height. This contradicts the fact that $B_0$ is the most recent honest ancestor.
                  On the other hand, if $B^*$ has a larger height than $B$, then both subchains will contain $B^*$ at time $t_1$.
                  This is because $B^*$---the only block at its height---will reach all the honest participants by time $t_1$. As a result,
                  the subchain with $B$ will certainly contain $B^*$ and so the height of $B^*$ is at most the height of $B$ plus $k-1$. Similarly, the subchain with $B'$ will contain $B^*$, since there are at least $k-1$ blocks on top of $B'$. This leads to a contradiction.
        \end{itemize}
        By (a slightly modified version of) Lemma~\ref{lemma:converged2}, for any given $t = t_1 - t_0$, we have
        \begin{equation}
            \Pr(\mathcal{E}_2(t_0, t_1)) \le e^{-\Omega\left(\delta^2 \min\{ \eta^2 \beta, \alpha  \} ft  \right)}.
        \end{equation}
        Finally, we will bound $t$ and complete the proof. We claim that
        \begin{equation}
            t > \frac{2k}{(1+\delta) f}
        \end{equation}
        except for $e^{-\Omega\left(\delta^2 k \right)}$ probability, regardless of malicious action.
        To see this, recall that $\mathcal{E}_1(t_0, t_1)$ states that two subchains contain at least $2k$ blocks. Hence, $t$ is smallest if all the mined blocks $N(t)$ from time $t_0$ to time $t_1$ (the number of which is honest blocks and malicious blocks) belong to
        these two subchains. By Lemma~\ref{lem:poisson},
        \begin{equation}
            \Pr\left( N(t) \ge (1+\delta) f t  \right) \le e^{-\delta^2 f t/3}.
        \end{equation}
        So, if we set $t = \frac{2k}{(1+\delta) f }$, then we have
        \begin{equation}
            \Pr\left( N(t) \ge 2k \right)  \le e^{-\delta^2 \frac{(2k)}{1 + \delta}/3}.
        \end{equation}
        This proves our claim.

        Define the event $\mathcal{D}$ as $t > \frac{2k}{(1+\delta) f }$.
        Then, $\Pr(\mathcal{D}^c) \le e^{-\Omega\left(\delta^2 k \right)}$, where $\mathcal{D}^c$ is the complement of $\mathcal{D}$. Therefore, for any malicious action, we have
        \begin{align}
            \Pr(\mathcal{E}) & = \Pr(\mathcal{D}^c) \Pr(\mathcal{E} | \mathcal{D}^c) + \Pr(\mathcal{D}) \Pr(\mathcal{E} | \mathcal{D}) \\
                             & \le \Pr(\mathcal{D}^c) + \Pr(\mathcal{D}) \Pr(\mathcal{E}_2(r_0, r) | \mathcal{D})                      \\
                             & \le \Pr(\mathcal{D}^c) + \Pr(\mathcal{E}_2(r_0, r) | \mathcal{D})                                       \\
                             & \le e^{-\Omega\left(\delta^2 \min\{ \frac{\eta}{f \Delta}, \beta   \} k \right)}
        \end{align}
        where the last inequality follows from $k \ge \min\{ \frac{\eta}{f \Delta}, \beta   \} k$.
    \end{proof}
    Note that in the safety proof of NC~\cite{garay2015bitcoin,Pass2017}, a block is confirmed if it is in the longest chain, and there are $k$ blocks mined on it rather than $k-1$ blocks adopted here.
    We make the subtle change to make the $k$-confirmation rule consistent with the one usually adopted in double-spending attack analysis.
\end{proof}

\end{document}